\newcommand{\corr}{(\Letter)}
\DeclareMathOperator*{\maximize}{maximize}
\DeclareMathOperator{\argmax}{argmax}
\DeclareMathOperator{\argmin}{argmin}
\newcommand{\set}[1]{\left\{#1\right\}}
\newcommand{\pr}[1]{\left(#1\right)}
\newcommand{\fpr}[1]{\mathopen{}\left(#1\right)}
\newcommand{\abs}[1]{{\left|#1\right|}}
\newcommand{\ceil}[1]{{\left\lceil#1\right\rceil}}
\newcommand{\np}{\textbf{NP}}
\newcommand{\define}{\leftarrow}
\DeclareRobustCommand{\dispfunc}[2]{%
	\ensuremath{%
		\ifthenelse{\equal{#2}{}}%
			{\mathit{#1}}%
			{\mathit{#1}\fpr{#2}}}}
\newcommand{\bigO}[1]{\dispfunc{\mathcal{O}}{#1}}
\newcommand{\score}{f}
\newcommand{\dens}{\Gamma}
\newcommand{\gdsh}{\textsc{swamp}\xspace}
\newcommand{\ugdsh}{\textsc{swamp}\xspace}
\newcommand{\dsg}{\textsc{dsg}}
\newcommand{\dss}{\textsc{dss}}
\newcommand{\card}{\textsc{card-swamp}}
\newcommand{\fair}{\textsc{fair-swamp}}
\newcommand{\contract}{\textsc{contract}\xspace}
\newcommand{\exact}{\textsf{Exact}}
\newcommand{\exactflow}{\textsf{ExactFlow}}
\newcommand{\peelmax}{\textsf{PeelMax}}
\newcommand{\greedypeel}{\textsf{Greedy}}
\newcommand{\peelzero}{\textsf{PeelZero}}
\newcommand{\degreepeel}{\textsf{DegPeel}}
\newcommand{\projflow}{\textsf{ProjFlow}}
\newcommand{\projgreedy}{\textsf{ProjGreedy}}
\renewcommand{\refname}{References}
\renewcommand\bibsection{%
  \section*{{\refname}\@mkboth{\refname}{\refname}}%
}%
\begin{document}

\title{The Densest SWAMP problem: subhypergraphs with arbitrary monotonic partial edge rewards}

\titlerunning{The Densest SWAMP Problem}



\author{Vedangi Bengali\inst{1} \corr \and
Nikolaj Tatti\inst{2} \and
Iiro Kumpulainen\inst{2} \and
Florian Adriaens\inst{2} \and
Nate Veldt\inst{1}}  


\authorrunning{V. Bengali et al.}

\institute{}
\institute{Texas A\&M University, College Station, TX, USA\\ \email{\{vedangibengali, nveldt\}@tamu.edu}
\and
University of Helsinki, HIIT, Helsinki, Finland\\
\email{\{nikolaj.tatti, iiro.kumpulainen, florian.adriaens\}@helsinki.fi}}

\maketitle              

\begin{abstract}
We consider a generalization of the densest subhypergraph problem where nonnegative rewards are given for including partial hyperedges in a dense subhypergraph. 
Prior work addressed this problem only in cases where reward functions are convex, in which case the problem is poly-time solvable. We consider a broader setting where rewards are monotonic but otherwise arbitrary. We first prove hardness results for a wide class of non-convex rewards, then design a $1/k$-approximation by projecting to the nearest set of convex rewards, where $k$ is the maximum hyperedge size. We also design another $1/k$-approximation using a faster peeling algorithm, which (somewhat surprisingly) differs from the standard greedy peeling algorithm used to approximate other variants of the densest subgraph problem. Our results include an empirical analysis of our algorithm across several real-world hypergraphs.

\keywords{Densest subhypergraphs \and Approximation algorithms}
\end{abstract}

\section{Introduction}
Dense subgraph discovery is a widely studied primitive in graph mining, with applications including team formation~\cite{gajewar2012multi,rangapuram2013towards}, motif discovery~\cite{fratkin2006motifcut}, and fraud detection~\cite{chen2022antibenford}. One of the most common problems in dense subgraph discovery is the densest subgraph problem (\dsg{}). For a graph $G = (V,E)$, \dsg{} seeks a node set $S \subseteq V$ that maximizes the ratio between (induced) edges and nodes, i.e.,
\begin{equation}
\label{eq:dsg}
\maximize_{S \subseteq V} \quad \frac{|E(S)|}{|S|},
\end{equation}
where $E(S) = \{ (u,v) \in E \colon u,v \in S\}$. There are known polynomial-time algorithms for exactly solving \dsg{} using flow-based methods~\cite{goldberg1984finding} and linear programming~\cite{charikar2000greedy}. There is also a simple greedy $1/2$-approximation based on \emph{peeling} (iteratively removing a minimum degree node)~\cite{asahiro2000greedily,charikar2000greedy}. Many results for variants of \dsg{} focus on extending one or more of these basic techniques (flow, linear programming, and peeling) to more general settings~\cite{khuller2009finding,chekuri2022densest,veldt2021generalized,lanciano2024survey,tsourakakis2015k,huang1995clusters,huang2024densest,zhou2022extracting}.

This paper focuses on generalizations of \dsg{} to hypergraphs, which group nodes into (hyper)edges involving an arbitrary number of nodes (rather than just two).
As a motivating application that we will consider throughout the manuscript, finding dense regions of a hypergraph provides a particularly natural approach for team formation. There are already a number of methods for team formation based on dense subgraph discovery~\cite{gajewar2012multi,rangapuram2013towards,miyauchi2023densest}. These operate under the assumption that a dense region of a social network represents a good team for a future task, since it encodes a group of people who have already collaborated extensively in the past. Using a hypergraph rather than a graph for this application is arguably more natural, since it directly captures entire previous \emph{team} interactions as hyperedges, rather than only previous pairwise relationships. 

The simplest generalization of \dsg{} to hypergraphs is to find a node set $S$ that maximizes the ratio between the number of edges \emph{completely contained in $S$} and $|S|$. This was first considered 30 years ago in the context of circuit decomposition~\cite{huang1995clusters} and has been considered by many others since~\cite{hu2017maintaining,bera2022new}. While this is a natural extension of the standard \dsg{} problem in graphs, there are many applications in which \emph{partially} including a hyperedge also intuitively contributes to notions of density. Consider again the example of team formation. 
When forming a new team, adding even a \emph{subset} of people from a previous collaboration intuitively matches the belief that previous collaborations help contribute to good future team interactions. However, the standard densest subhypergraph objective only gives a reward for including an entire hyperedge (i.e., all members of some previous team) to the new team. 

In this paper, we focus on new algorithms for a generalized densest subhypergraph objective introduced by \citet{zhou2022extracting}, which incorporates positive rewards for partially included edges. In more detail, each edge has a nonnegative monotonic reward function $r$, and the overall objective is to maximize a sum of partial rewards divided by the size of the output set (Section~\ref{sec:swamp} includes a formal definition). \citet{zhou2022extracting} proved that there exist reward functions for which this general problem becomes \np-hard. However, they then focused on convex reward functions, in which case the problem is a special case of the poly-time solvable densest supermodular subset problem~\cite{chekuri20221}. They designed an exact algorithm for this convex case based on solving maximum $s$-$t$ flow problems, and showed that a standard greedy peeling algorithm yields a $1/k$-approximation where $k$ is the maximum hyperedge size.

\textbf{Our contributions.} We focus on the problem of finding a densest Subhypergraph With Arbitrary Partial edge rewards (\gdsh{}). By arbitrary, we mean that rewards are not required to be convex, as was the focus for~\citet{zhou2022extracting}. Our contributions include the following:
\begin{itemize}
    \item
    We completely settle the complexity of \gdsh{} for all reward functions, significantly strengthening prior hardness results. Concretely, when applying the same reward function $r$ to every edge, \gdsh{} is either poly-time solvable because $r$ is convex, trivially optimized by a single node if $r$ is non-convex but satisfies a certain extremal condition, or is otherwise \np-hard. 
    
    \item
    We design peeling algorithms for the non-convex case that come with a $1/k$-approximation. Somewhat surprisingly, this approximation is not obtained by the standard greedy peeling method (which gives a $1/k$ approximation for the convex case), but rather with a peeling method that makes locally suboptimal choices about which node to remove in each step. 
    
    \item
    We design approximation algorithms based on projecting non-convex rewards to convex rewards and then solving the latter problem. In the worst case, these come with a $1/k$-approximation.

    \item
    We introduce a new integer linear programming formulation for finding optimal solutions to \gdsh{}, that works even for (small-scale) \np-hard cases. 

    \item
    We implement these algorithms and demonstrate their performance on real-world datasets, showing they exceed their theoretical guarantees in practice.
    
    \item
    We show that as a consequence of having approximation guarantees for \gdsh{}, we obtain approximations for constrained variants of our objective. 
    
\end{itemize}

\textbf{Other related work.} We refer to \citet{lanciano2024survey} for an extensive recent survey on variants of the densest subgraph problem. We briefly cover several results that are particularly relevant to our paper.  Several variants of \dsg{} include constraints on the number of nodes in the output set $S$, including two variants introduced by \citet{andersen2009finding} called the densest at-most-$k$ ($|S| \leq k$), and the densest at-least-$k$ ($|S| \geq k$) problems. In other problem variants, the goal is to find a dense subgraph that ensures nodes with different node labels are included in the output. In some cases, node labels represent ``skills'' and the goal is to form a (dense) team of individuals to cover certain skill sets~\cite{gajewar2012multi,rangapuram2013towards}. In other cases, node labels represent disjoint protected classes and the goal is to form a dense subgraph that is ``diverse'' or ``fair''~\cite{anagnostopoulos2020spectral,miyauchi2023densest,kariotakis2025fairness}. Approximation algorithms for several of these variants build upon earlier approximation algorithms for the densest at-least-$k$ variant~\cite{gajewar2012multi,miyauchi2023densest}.

Recently, Chekuri et al.~\cite{chekuri2022densest} introduced the densest supermodular subset problem (\dss{}) where the goal is to maximize $F(S)/|S|$ where $F \colon 2^V \rightarrow \mathbb{R}^+$ is a nonnegative monotone supermodular function defined over a ground set $V$. Their results for this objective include a fast flow-based approximation algorithm, faster greedy peeling methods, and extensions to cardinality-constrained variants (e.g., the constraint $|S| \geq k$). The greedy peeling method starts with the entire node set $V$, finds a node $v = \argmin F(V) - F(V - v)$, and then removes it. At the next step, the same strategy is applied to the remaining node set. At the end, the method chooses the subset of nodes considered along the way with the best objective. This strategy is greedy in the sense that it removes the node that leads to the smallest decrease in the numerator of the objective at each step. This peeling method directly generalizes existing peeling methods for several special cases~\cite{charikar2000greedy,veldt2021generalized,zhou2022extracting}, including the $1/k$-approximation for the generalized densest subhypergraph objective of \citet{zhou2022extracting}.

 \section{The Densest SWAMP Problem}
\label{sec:swamp}
We consider a generalized maximum density problem with a hypergraph input $H = (V,E,w, \{r_e \colon e \in E\})$. Each $e \in E$ comes with a non-negative scalar weight $w(e) \geq 0$ and an \emph{edge-reward function} $r_e \colon \{0,1,\hdots,|e|\} \rightarrow \mathbb{R}_{\geq 0}$. The latter assigns a non-negative reward based on the \emph{number} of nodes in the hyperedge included in a node set $S$, even if $e$ is not entirely contained in $S$. We assume the reward function is monotonic and gives no reward for including no nodes from the edge:
\begin{equation*}
    0 = r_e(0) \leq r_e(1) \leq r_e(2) \leq \cdots \leq r_e(|e|).
\end{equation*}
This encodes the belief that including more of a hyperedge within a set should contribute more to the measure of density. 

\begin{problem}
    \label{pr:gdsh}
    For input $H = (V,E,w,\{r_e\colon e \in E\}$, the Densest Subhypergraph With Arbitrary Monotonic Partial edge rewards problem (\gdsh{}) seeks to solve
    \begin{align} \label{eq:gdsh}
    \max_{S \subseteq V}\quad \Gamma(S) = \frac{f(S)}{|S|}, \quad \text{ where }  f(S) = \sum_{e \in \mathcal{E}} w(e)\cdot r_e(|e \cap S|).
\end{align}
\end{problem}
If the rewards satisfy $r_e(i) = 0$ for $i < |e|$ and $r_e(|e|) = 1$, then this corresponds to the standard densest subhypergraph objective.
Zhou et al.~\cite{zhou2022extracting} were the first to study Problem~\ref{pr:gdsh}, though focused almost exclusively on the convex case. A discrete monotonic edge-reward function $r_e:\{0,1,\hdots,|e|\} \rightarrow \mathbb{R}_{\ge 0}$ is said to be convex if:
\begin{align}
    r_e(i+1)-r_e(i) \ge r_e(i) - r_e(i-1)  \quad \text{for } i \in \{1,2,\hdots, |e|-1\}.
\end{align}
We instead focus on the case where rewards are monotonic but otherwise {arbitrary}, i.e., not restricted to be convex.

In our study of hardness results, we consider a special case of \gdsh{} where the input is a hypergraph with maximum edge size $k$ and we use the same reward function $r$ for every edge. We refer to this as $\ugdsh(r)$. In this case, $r$ represents a parameter defining the \emph{problem} rather than a reward function that is inherently part of the input. Our goal is to characterize the complexity of $\ugdsh(r)$ under different choices for $r$.

We also consider two-sized constrained variants. The first is to maximize $\Gamma(S)$ subject to a cardinality constraint $|S| \geq \ell$ (where $\ell$ is an input to the problem), which we refer to as \card{}. We also consider a fair variant.
\begin{problem}
\label{pr:fair}
    Assume nodes of $H = (V,E,w, \{r_e\})$ are partitioned into node classes $\{C_1, C_2, \hdots, C_m\}$, and let a parameter $\ell_i$ be given for $C_i$ for $i = 1, 2, \hdots, m$. The Densest Fair SWAMP problem (\fair{}) is defined by
    \begin{equation}
        \maximize_{S \subseteq V} \quad \Gamma(S) \quad \text{subject to $|S\cap C_i| \geq \ell_i$, for $i = 1, 2, \hdots, m$}.
    \end{equation}
\end{problem}
If we think of nodes as individuals, the classes in Problem~\ref{pr:fair} can represent \emph{skills} that must be present when forming a team of individuals based on their past collaborations. As another example, classes could represent categories that must be fairly represented in the output set (e.g., selecting faculty members from all faculty ranks to represent an academic department on some internal committee).
 \section{Computational complexity of $\ugdsh(r)$}

The complexity of $\ugdsh(r)$ is known in the convex case.
\begin{theorem}[\citep{zhou2022extracting}]
\label{thr:poly}
If $r$ is convex, $\ugdsh(r)$ is polynomial-time solvable.
\end{theorem}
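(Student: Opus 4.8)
The plan is to reduce $\ugdsh(r)$ to the densest supermodular subset problem (\dss{}) of Chekuri et al.~\cite{chekuri2022densest}, which we may take as poly-time solvable. The only thing to verify is that the numerator $f$ from \eqref{eq:gdsh} is a nonnegative, monotone, supermodular set function whenever every edge reward is convex; the ratio $f(S)/|S|$ is then exactly a \dss{} instance, and solving $\ugdsh(r)$ follows immediately.

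First I would isolate the contribution of a single edge. Fix $e \in E$ and set $g_e(S) = r_e(|e \cap S|)$, so that $f = \sum_{e} w(e)\, g_e$. To show $g_e$ is supermodular it suffices to check the increasing-marginal-returns condition: for every $S \subseteq T$ and every $v \notin T$,
\begin{equation*}
g_e(S \cup \{v\}) - g_e(S) \;\le\; g_e(T \cup \{v\}) - g_e(T).
\end{equation*}
If $v \notin e$ both marginals are $0$ and the inequality is trivial. If $v \in e$, then writing $i = |e \cap S|$ and $j = |e \cap T|$ we have $i \le j$ because $S \subseteq T$, and the two marginals equal $r_e(i+1) - r_e(i)$ and $r_e(j+1) - r_e(j)$ respectively. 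Convexity of $r_e$ says precisely that the discrete differences $r_e(t+1) - r_e(t)$ are nondecreasing in $t$, so $r_e(i+1) - r_e(i) \le r_e(j+1) - r_e(j)$, which is the desired inequality.

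Next I would assemble the global function. Since each $g_e$ is supermodular and each $w(e) \ge 0$, the nonnegative combination $f = \sum_e w(e)\, g_e$ is supermodular; it is nonnegative and monotone because each $r_e$ is nonnegative and nondecreasing and $|e \cap S|$ is itself monotone in $S$. Hence maximizing $f(S)/|S|$ is a \dss{} instance, and the poly-time machinery for \dss{} (for example, binary-searching a candidate density $\lambda$ and minimizing the submodular function $\lambda|S| - f(S)$ via min-cut / max-flow) solves $\ugdsh(r)$ exactly.

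The main obstacle is care rather than depth: the crux is the single-edge supermodularity check, i.e., correctly translating the discrete convexity inequality into increasing marginal returns through the case split on whether the added node lies in $e$. Everything else---closure of supermodularity under nonnegative sums, and the reduction to \dss{}---is standard once that step is in place. As an alternative one could bypass \dss{} and build an explicit $s$-$t$ flow network that encodes each convex $r_e$ by a piecewise-linear gadget, recovering the flow-based exact algorithm noted earlier; the supermodular reduction is simply cleaner to state.
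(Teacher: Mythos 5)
Your proof is correct: the single-edge supermodularity check (convex discrete differences $\Leftrightarrow$ increasing marginal returns), closure under nonnegative sums, and the reduction to the densest supermodular subset problem are all sound. The paper gives no proof of this theorem---it is cited from Zhou et al.---but your argument is exactly the justification the paper itself sketches in the introduction (the convex case is a special case of \dss{}, solvable by flow-based methods), so this matches the intended route.
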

In addition, \citet{zhou2022extracting} showed that for \emph{some} choice of reward functions (where some rewards are convex but others are not), Problem~\ref{pr:gdsh} is \np-hard. We strengthen this result by showing that for \emph{every} non-convex $r$ (minus a corner case, described below), $\ugdsh(r)$ is \np-hard. This provides us with a full picture of the computational complexity of $\ugdsh(r)$.

Let us first describe a corner case when the optimal solution is a single node. This generalizes a result of \citet{zhou2022extracting} which says that if $r$ is concave, a single node defines the optimal solution.
\begin{theorem}
\label{thr:corner}
Assume $r$ such that for every $i \in \{1,2, \hdots, k\}$, it holds that $r(1)\geq r(i)/i$. Then there is an optimal solution for $\ugdsh(r)$ consisting of a single node.
\end{theorem}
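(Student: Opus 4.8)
The plan is to prove the contrapositive-free statement directly: show that \emph{every} node set $S$ satisfies $\Gamma(S) \le \max_{v \in V} \Gamma(\{v\})$, so that some singleton already attains the optimum. The whole argument reduces to a single chain of inequalities driven by the extremal hypothesis $r(1) \ge r(i)/i$.

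First I would evaluate the objective on a singleton $\{v\}$. Since $r(0)=0$ and $|e \cap \{v\}| \in \{0,1\}$, only edges containing $v$ contribute, each with reward $r(1)$, so
\[
\Gamma(\{v\}) = \frac{f(\{v\})}{1} = r(1)\cdot d(v), \qquad \text{where } d(v) = \sum_{e \ni v} w(e)
\]
is the weighted degree of $v$. Hence $\max_v \Gamma(\{v\}) = r(1)\cdot \max_v d(v)$.

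Next, for an arbitrary $S$, I would bound $f(S)$ from above. Rewriting the hypothesis as $r(i) \le i\,r(1)$ for all $i \in \{1,\dots,k\}$ and applying it edgewise with $i = |e\cap S| \le k$ gives $f(S) \le r(1)\sum_{e} w(e)\,|e\cap S|$. The key step is then a double-counting identity obtained by exchanging the order of summation over edges and over the nodes of $S$ they contain:
\[
\sum_{e} w(e)\,|e \cap S| = \sum_{e} w(e) \sum_{v \in S} \mathbf{1}[v \in e] = \sum_{v \in S} d(v).
\]
Bounding each term by the maximum degree, $\sum_{v\in S} d(v) \le |S|\,\max_{v} d(v)$, yields $f(S) \le r(1)\,|S|\,\max_v d(v)$, and dividing by $|S|$ gives exactly $\Gamma(S) \le r(1)\max_v d(v) = \max_v \Gamma(\{v\})$.

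I do not anticipate a genuine obstacle, as this is a short direct estimate. The only points requiring care are confirming that $i = |e \cap S|$ never exceeds $k$ so the hypothesis applies to every edge, and verifying that the extremal condition is precisely the inequality needed to replace each edge reward by the linear ``per-node'' upper bound $i\,r(1)$; this is what makes the reward function behave no better than a simple degree sum, for which spreading mass over a single maximum-degree node is optimal.
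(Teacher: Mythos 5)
Your proof is correct and follows essentially the same route as the paper's: both bound each edge reward via $r(|e\cap S|)\le r(1)\,|e\cap S|$ from the extremal hypothesis, exchange the order of summation to turn the numerator into a sum of weighted degrees (equivalently, of singleton scores), and then bound that sum by $|S|$ times the best singleton. The only cosmetic difference is that the paper applies the chain to an optimal set $O$ and maximizes over $o\in O$, whereas you state the bound for arbitrary $S$ against the global best singleton; the underlying inequalities are identical.
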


\begin{proof}
For instance $H = (V, E, w)$, let $O$ be the solution for $\ugdsh(r)$. Let $o \in O$ be the node maximizing $\score(\set{o})$. Then
\begin{align*}
    \dens(O)
    & = \frac{\score(O)}{\abs{O}}
    = \frac{1}{\abs{O}}\sum_{e \in E} w(e)\cdot r(\abs{e \cap O}) \\
    & \leq \frac{1}{\abs{O}}\sum_{e \in E} w(e) \cdot r(1) \cdot \abs{e \cap O} = \frac{1}{\abs{O}}\sum_{x \in O} \score(\set{x}) \leq \score(\set{o}) = \dens(\set{o}),
\end{align*}
proving the claim.\qed
\end{proof}

The main result of this section now states that if the conditions in Theorems~\ref{thr:poly}--\ref{thr:corner} are not satisfied, then $\ugdsh(r)$ is \np-hard, see Appendix for proof.
\begin{theorem}
Assume $r$ such that for some positive integers $i$ and $j$ we have $r(i)/i > r(1)$ and $2r(j) > r(j-1) + r(j+1)$. 
Then $\ugdsh(r)$ is \np-hard.
\end{theorem}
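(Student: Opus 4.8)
The plan is to prove \np-hardness by reduction from the (decision) maximum independent set problem: given a graph $G=(V_G,E_G)$ and an integer $t$, decide whether $G$ has an independent set of size $\ge t$. Both hypotheses of the theorem play a role: the strict concavity $2r(j)>r(j-1)+r(j+1)$ supplies the non-supermodularity that makes the problem hard, whereas the super-additivity $r(i)/i>r(1)$ --- the negation of the corner case in Theorem~\ref{thr:corner} --- is what guarantees that the optimal density is attained by a large node set rather than a single vertex, so that the encoded combinatorial problem is genuinely optimized.

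The first step is to \emph{localize} the arbitrary kink at $j$ using padding. For each edge $\{u,v\}\in E_G$ I would create one hyperedge consisting of a private block of $j-1$ ``anchor'' nodes together with the two ``endpoint'' nodes $x_u,x_v$. Anchors are made attractive enough (through heavily weighted low-arity edges, with weights calibrated against the target density) that every near-optimal solution selects all of them. Conditioned on its anchors being present, such a hyperedge realizes the three values $r(j-1),r(j),r(j+1)$ as $0,1,2$ of its endpoints are added; by monotonicity and concavity its first endpoint marginal $r(j)-r(j-1)$ is strictly larger than its second $r(j+1)-r(j)\ge 0$. This collapses the arbitrary reward profile to a single, controllable concave gadget on a pair of endpoints.

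The key identity is then to read off $f$ on a choice $S$ of endpoints (all anchors included). Writing $e_G(S)$ for the number of edges of $G$ induced by $S$ and using $\sum_{v\in S}\deg(v)=(\text{edges touched by }S)+e_G(S)$, the total reward takes the form $f(S)=\text{const}+(r(j)-r(j-1))\sum_{v\in S}\deg(v)-C\,e_G(S)$, where $C=2r(j)-r(j-1)-r(j+1)>0$ is exactly the concavity constant from the hypothesis. I would then neutralize the linear term $\sum_{v\in S}\deg(v)$ by adding a private per-vertex reward to each $x_v$ so that the coefficient of $[v\in S]$ becomes the same constant $A$ for all $v$, giving $f(S)=\text{const}+A\,|S_{\mathrm{var}}|-C\,e_G(S)$. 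With the anchors forced in, maximizing density therefore rewards adding endpoints (rate $A>0$) and penalizes every induced edge (rate $C>0$): the density-optimal endpoint set of a given size is exactly the one inducing fewest edges, and it has zero induced edges precisely when $G$ has an independent set of that size.

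The step I expect to be the main obstacle is the quantitative calibration that makes this equivalence exact despite two complications. First, \gdsh{} has no hard cardinality bound, so I must choose the anchor weights, the per-vertex padding, and a density threshold $\lambda$ (all polynomially computable from $r$, $t$ and $G$) so that over \emph{all} cardinalities the optimum is driven to a maximum independent set of the target size; here the super-additivity hypothesis is used to guarantee that the structured large-set solutions out-compete any single node, ruling out the corner solution of Theorem~\ref{thr:corner}. Second, everything must be arranged uniformly for an arbitrary $r$: the kink index $j$ and the values $r(j-1),r(j),r(j+1)$ are unknown, and the degree-neutralizing padding needs a separate implementation when $r(1)=0$ (so that size-one rewards vanish), which I would handle by injecting the linear terms through additional padded edges rather than singletons. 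The conceptual core --- that the concavity constant $C>0$ is precisely a per-edge penalty turning density maximization into independent set --- is already captured above; what remains is this (routine but careful) arithmetic of weights and thresholds.
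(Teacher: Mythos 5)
Your proposal is correct in outline and follows essentially the same route as the paper's proof: a reduction from maximum independent set in which each graph edge becomes a size-$(j+1)$ hyperedge padded with $j-1$ forced-in auxiliary nodes (so the concavity gap $2r(j)-r(j-1)-r(j+1)>0$ acts as a per-induced-edge penalty), the condition $r(i)/i>r(1)$ is used to force the auxiliary pool into any optimal solution, and per-vertex reward edges calibrate the density so it is maximized exactly at a maximum independent set. The only differences are cosmetic: the paper reduces from independent set on 3-regular graphs, so degrees are already uniform and your degree-neutralizing padding is unnecessary, and it realizes your ``heavily weighted low-arity edges'' concretely as suitably weighted hyperedges on all $i$-subsets of the auxiliary pool.
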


 \section{Solving \gdsh with an integer linear program}\label{sec:ilp}
We introduce a new ILP formulation for solving the decision version of the problem, which asks whether there exists some set $S$ with density greater than $\alpha$ for some pre-specified $\alpha$. More formally, finding $S$ with $\dens(S) > \alpha$ is equivalent to finding $S$ for which $\score(S) - \alpha \abs{S} > 0$. We will solve the latter using an ILP, and once we have the solution, we can solve \gdsh by performing a binary search on $\alpha$. This construction is similar to the approach by~\citet{goldberg1984finding} for finding the densest subgraph.

We consider the following integer linear program.
\begin{align}
	\label{eq:maxecc}
\begin{aligned}
	\text{maximize} \quad& \left( \sum_{e \in E} w(e) \sum_{i = 1}^{|e|} \delta_{e,i} \cdot y_{e,i} \right) - \alpha \sum_{v \in V} x_v \hspace*{-1cm}\\ 
	\text{s.t.} \quad& y_{e,i} \leq \frac{1}{i} \sum_{v \in e} x_v & \text{for } e \in E,\  i = 1,2, \hdots, |e|, \\
	&y_{e,i} \in \{0,1 \} & \text{for } e \in E,\  i = 1,2, \hdots, |e|, \\
	&x_v \in \{0,1 \}  & \text{for } v \in V.\\
\end{aligned}
\end{align}
In the above, we have defined
\begin{equation*}
    \delta_{e,i} = r_e(i) - r_e(i-1) \geq 0
\end{equation*}
so that if we include $k$ nodes from edge $e$ in the set $S$, we know that this gives a reward of
\begin{equation*}
 r_e(k) = \sum_{i = 1}^k \delta_{e,i}.
\end{equation*}
This ILP includes a variable $x_v \in \{0,1\}$ that indicates whether node $v$ is contained in the optimal density set $S$ ($x_v = 1$) or not ($x_v = 0$). The variable $y_{e,i}$ is designed to satisfy
\begin{equation}
\label{eq:yei}
y_{e,i} = \begin{cases}
1 & \text{ if $|e \cap S| \geq i$}\\
0 & \text{ otherwise.}
\end{cases}
\end{equation}
Observe that if this is the case, then the first part of the objective function of the ILP is exactly the edge reward
\begin{align*}
\sum_{e \in E} w(e) \sum_{i = 1}^{|e|} \delta_{e,i} \cdot y_{e,i} = \sum_{e \in E} w(e) \sum_{i = 1}^{|e\cap S|} \delta_{e,i} = \sum_{e \in E} w(e) \cdot r_e(|e \cap S|).
\end{align*}
We just need to confirm that for the optimal solution, the $y_{e,i}$ variables will indeed satisfy Eq.~\eqref{eq:yei}. Observe first of all that maximizing the objective, plus the fact that the $\delta_{e,i}$ are all positive, will ensure that the $y_{e,i}$ variables will be set to 1 whenever possible. Now, note that the constraint
\begin{equation*}
    y_{e,i} \leq \frac{1}{i} \sum_{v \in e} x_v
\end{equation*}
is equivalent to a bound $y_{e,i} < 1$ if $|e \cap S| < i$, and otherwise it amounts to a bound $y_{e,i} \leq c$ for some $c \geq 1$ if $|e \cap S| \geq i.$

 \section{Approximating \gdsh}
Our most significant algorithmic contributions are peeling-based approximation algorithms for the \np-hard regime of \gdsh{}. Our findings are surprising for two reasons. First of all, our peeling approach for the \np-hard non-convex case has a $1/k$-approximation, which is just as good as the peeling approximation guarantee for the (poly-time solvable) convex case. The second surprise is that our approximation guarantees \textit{do not} come from using the standard existing greedy peeling algorithm, which peels away a node in a way that leads to the best objective in the subsequent step. Rather, our guarantees work only for certain peeling strategies that may make locally suboptimal choices for which node to remove at each step. In addition to our peeling algorithms, we design another $1/k$-approximation based on projecting non-convex reward functions to convex rewards and then exactly solving the resulting convex problem.

\subsection{Peeling algorithm}
Here we describe a peeling algorithm, which will result in a $1/k$ approximation for \gdsh.
For notational simplicity, we will assume that $w(e) = 1$. Note that we can make this assumption without the loss of generality since we can incorporate the weights directly to the rewards $r_e$.

The greedy peeling algorithm by~\citet{zhou2022extracting} operates by iteratively deleting a vertex $v$ with the smallest decrease in $\score$, that is,
\begin{equation}
\label{eq:convpeel}
    \score(S) - \score(S \setminus \set{v}) = \sum_{e: v \in e} r_e(|e \cap S|) - r_e(|e \cap S| - 1),
\end{equation}
and then returning the best tested subgraph.

Unfortunately, this approach yields a guarantee only when $r_e$ is convex and can fail for non-convex rewards.
We extend the approach by replacing the second $r_e$ in Eq.~\ref{eq:convpeel} with a different function $s_e$, which needs to be specified separately. The pseudo-code for the algorithm is given in Algorithm~\ref{algo:peeling}. We will show that certain choices for $s_e$ lead to a guarantee.

\begin{algorithm}[t]
\caption{Approximates the densest subgraph problem, \gdsh}
\label{algo:peeling}
\begin{algorithmic}[1]
\Require Hypergraph $\mathcal{H}=(V, E, r_e)$, and a bound function $s_e(\cdot)$.
\State $X \leftarrow V$ and $Y \leftarrow V$
\While {$X \neq \emptyset$}
	\State $v \leftarrow \arg \min_{u \in X} \sum_{e: u \in e} r_e(|e \cap X|) - s_e(|e \cap X| - 1)$.
    \State $X \leftarrow X \setminus \{v\}$.
	\If{$\dens(X) \geq \dens(Y)$,} $Y \leftarrow X$.
	\EndIf
\EndWhile
\Ensure $Y$.
\end{algorithmic}
\end{algorithm}

The following result shows the conditions required for $s_e$ so that Algorithm~\ref{algo:peeling} yields an approximation guarantee.

\begin{theorem}
\label{thr:peel}
Assume a hypergraph ${H} = (V, E, \{r_e\})$ and a function $s_e$ for each $e \in E$ satisfying $0 \leq s_e(i) \leq r_e(i)$ and $r_e(i) - s_e(i - 1) \leq r_e(i + 1) - s_e(i)$
Then Algorithm~\ref{algo:peeling} yields a $1/k$ approximation for \gdsh.
\end{theorem}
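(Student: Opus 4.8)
The plan is to adapt the classical Charikar-style charging argument for greedy peeling to this generalized setting. Write $O$ for an optimal set and $\rho^\ast = \dens(O)$, and for a working set $X$ and node $u \in X$ abbreviate the quantity minimized in line~3 of Algorithm~\ref{algo:peeling} as $d_X(u) = \sum_{e : u \in e} \pr{r_e(\abs{e \cap X}) - s_e(\abs{e \cap X}-1)}$. It is convenient to set $g_e(i) = r_e(i) - s_e(i-1)$, so that $d_X(u) = \sum_{e : u \in e} g_e(\abs{e \cap X})$ and the hypothesis $r_e(i) - s_e(i-1) \le r_e(i+1) - s_e(i)$ says precisely that each $g_e$ is non-decreasing. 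The key event to track is the \emph{first iteration at which the peeled node $v$ belongs to $O$}; let $X$ denote the working set at the start of that iteration. By construction every node removed earlier lies outside $O$, so $O \subseteq X$, and moreover $v \in O$. I will show $\dens(X) \ge \rho^\ast / k$; since $X$ coincides with the initial set $V$ or with a set produced by an earlier removal, it is one of the candidate sets tested against $Y$, and hence the returned set satisfies $\dens(Y) \ge \dens(X) \ge \rho^\ast/k$.

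Two bounds on $d_X(v)$ drive the argument. For the upper bound, I average: since $v$ minimizes $d_X(\cdot)$ over $X$,
\[
d_X(v) \le \frac{1}{\abs{X}} \sum_{u \in X} d_X(u) = \frac{1}{\abs{X}} \sum_{e \in E} \abs{e \cap X}\, g_e(\abs{e \cap X}) \le \frac{k}{\abs{X}} \sum_{e \in E} r_e(\abs{e \cap X}) = k\,\dens(X),
\]
where the middle equality counts, for each edge, its $\abs{e \cap X}$ contributing endpoints, and the last inequality uses $g_e(i) \le r_e(i)$ (from $s_e \ge 0$) together with $\abs{e \cap X} \le k$. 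For the lower bound, I use that $v \in O$: since $g_e$ is non-decreasing and $\abs{e \cap X} \ge \abs{e \cap O}$,
\[
d_X(v) = \sum_{e : v \in e} g_e(\abs{e \cap X}) \ge \sum_{e : v \in e} g_e(\abs{e \cap O}) \ge \sum_{e : v \in e} \pr{r_e(\abs{e \cap O}) - r_e(\abs{e \cap O}-1)} = \score(O) - \score(O \setminus \set{v}),
\]
the second inequality now using $s_e(i-1) \le r_e(i-1)$. Finally, optimality of $O$ gives $\dens(O \setminus \set{v}) \le \rho^\ast$, i.e. $\score(O \setminus \set{v}) \le \rho^\ast(\abs{O}-1)$, which combined with $\score(O) = \rho^\ast \abs{O}$ yields $\score(O) - \score(O \setminus \set{v}) \ge \rho^\ast$. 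Chaining the two bounds gives $\rho^\ast \le d_X(v) \le k\,\dens(X)$, as desired.

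The main obstacle --- and the crux of why the modified peeling score is needed --- is the lower bound step: transferring the per-node optimality inequality, which naturally lives on $O$, up to the larger working set $X$. This transfer is exactly what the monotonicity hypothesis $g_e(\abs{e\cap O}) \le g_e(\abs{e \cap X})$ buys; with the standard greedy score $s_e = r_e$ one would instead need $r_e(i) - r_e(i-1)$ to be non-decreasing, which holds only in the convex regime, explaining why ordinary greedy peeling can fail for non-convex rewards. The remaining points to verify carefully are that $X$ is genuinely among the sets tested by the algorithm (handled by the initialization $Y \leftarrow V$ and the update rule), and the degenerate case $\abs{O} = 1$, where $O \setminus \set{v} = \emptyset$ and the optimality inequality reads $\score(\set{v}) \ge \rho^\ast$, holding with equality.
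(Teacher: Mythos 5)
Your proof is correct and follows essentially the same route as the paper's: you identify the first iteration in which a node of $O$ is peeled, sandwich the modified peeling score of that node between $k\,\dens(X)$ (by averaging over $X$ and using $s_e \geq 0$ with $\abs{e \cap X} \leq k$) and the marginal decrease $\score(O) - \score(O \setminus \set{v})$ (via monotonicity of $g_e$ and $s_e \leq r_e$), then close with the standard optimality inequality, which is exactly the paper's Lemma~\ref{lem:intermed}. The only differences are cosmetic: you run the averaging step through the modified score $d_X(\cdot)$ rather than the raw rewards (arguably a cleaner justification of the paper's ``choice of $x$'' step) and you explicitly note the degenerate case $\abs{O}=1$.
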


To prove the claim, we need the following standard lemma.
\begin{lemma}
\label{lem:intermed}
Let $O \subseteq V$ be an optimal solution. Then for any $o \in O$, 
\[
\dens(O) \leq \sum_{e: o \in e} r_e(|e \cap O|) - r_e(|e \cap O| - 1).
\]
\end{lemma}    
\begin{proof}
By optimality of $O$, it holds
\[
\dens(O) \geq \frac{\score(O \setminus \{o\})}{|O|-1} = \frac{\score(O)-(\score(O)-\score(O \setminus \{o\})}{|O|-1}. 
\]
Rewriting this inequality and using $\dens(O) = \frac{\score(O)}{|O|}$ shows that $\dens(O) \leq \score(O)-\score(O \setminus \{o\})$. See also \cite[Theorem 4]{zhou2022extracting}.
Thus,
\begin{align*}
\dens(O) &\leq \score(O)-\score(O \setminus \{o\}) \\
&= \sum_{e \in E} (r_e(|e \cap O|)-r_e(|e \cap (O \setminus \{o\})|) \\
&= \sum_{e: o \in e} (r_e(|e \cap O|)-r_e(|e \cap (O \setminus \{o\})|), 
\end{align*}
proving the lemma.\qed
\end{proof}

\begin{proof}[Proof of Theorem~\ref{thr:peel}]
Let $O \subseteq V$ be an optimal solution.
Let $X'$ be the subgraph $X$ defined by the while loop in Algorithm~\ref{algo:peeling} when the first vertex from $O$ is deleted. Call that vertex $x$.
Note that since each edge $e$ contains at most $k$ vertices, it holds that
\begin{align*}
k \dens(X') &\geq \frac{1}{{|X'|}}\sum_{u \in X'} \sum_{e: u \in e} r_e(|e \cap X'|) \\
&\geq \sum_{e: x \in e} r_e(|e \cap X'|) \\
& \geq \sum_{e: x \in e} r_e(|e \cap X'|) - s_e(|e \cap X'| - 1),
\end{align*}
where the second inequality follows from our choice of $x$ in Algorithm~\ref{algo:peeling}.
Furthermore, as $O \subseteq X'$ by the imposed conditions on $s_e$ it holds that
\begin{align*}
\sum_{e: x \in e} r_e(|e \cap X'|) - s_e(|e \cap X'| - 1)
& \geq \sum_{e: x \in e} r_e(|e \cap O|) - s_e(|e \cap O| - 1) \\
& \geq \sum_{e: x \in e} r_e(|e \cap O|) - r_e(|e \cap O| - 1).
\end{align*}
Since $x \in O$, we can use Lemma~\ref{lem:intermed}, and the theorem follows.
\qed
\end{proof}

Next, we show two options for $s$, both satisfying the conditions in Theorem~\ref{thr:peel}.

\begin{theorem}
\label{thm:max_se}
Assume a hypergraph $H = (V, E, \{r_e\})$.
Let $b_e(i) = 0$ and $u_e(i) = r_e(i+1) - \max_{0\leq j \leq i} \pr{r_e(j+1)-r_e(j)}$. Then $b_e$ and $u_e$ satisfy the conditions for $s_e$ in Theorem~\ref{thr:peel}. Moreover, any $s_e$ that satisfies the conditions in Theorem~\ref{thr:peel} will have $b_e(i) \leq s_e(i) \leq u_e(i)$. If $r_e$ is convex, then $u_e = r_e$.
\end{theorem}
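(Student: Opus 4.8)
The plan is to work throughout with the increment form of the second condition in Theorem~\ref{thr:peel}. Rearranging $r_e(i) - s_e(i-1) \leq r_e(i+1) - s_e(i)$ gives the equivalent statement $s_e(i) - s_e(i-1) \leq r_e(i+1) - r_e(i)$, so the admissible functions $s_e$ are exactly those bounded pointwise by $r_e$ from above and by $0$ from below, whose consecutive increments are dominated by the \emph{shifted} increments of $r_e$. It is also convenient to write $M(i) = \max_{0 \leq j \leq i}\pr{r_e(j+1) - r_e(j)}$, so that $u_e(i) = r_e(i+1) - M(i)$.

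First I would verify that $b_e$ and $u_e$ are themselves admissible. For $b_e \equiv 0$ both conditions collapse to the monotonicity $r_e(i) \leq r_e(i+1)$ together with $r_e(i) \geq 0$, which hold by assumption. For $u_e$, the nonnegativity $u_e(i) \geq 0$ and the upper bound $u_e(i) \leq r_e(i)$ both reduce to comparisons with $M(i)$: the former to $M(i) \leq r_e(i+1)$ (each competing increment is at most $r_e(i+1)$ by monotonicity and nonnegativity), and the latter to $M(i) \geq r_e(i+1) - r_e(i)$, which is immediate by taking $j = i$ in the maximum. The increment condition for $u_e$ unwinds to $M(i) \geq M(i-1)$, which holds because $M(i)$ maximizes over a strictly larger index set than $M(i-1)$.

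The heart of the argument is the two-sided squeeze $b_e(i) \leq s_e(i) \leq u_e(i)$ for an arbitrary admissible $s_e$. The lower bound is just condition~1. For the upper bound, the key idea is to telescope the increment inequality over a suffix: fixing any $j \leq i$ and summing $s_e(l) - s_e(l-1) \leq r_e(l+1) - r_e(l)$ from $l = j+1$ to $i$ yields $s_e(i) - s_e(j) \leq r_e(i+1) - r_e(j+1)$. I would then feed in the pointwise bound $s_e(j) \leq r_e(j)$ to obtain
\[
s_e(i) \leq r_e(i+1) - \pr{r_e(j+1) - r_e(j)}.
\]
Since this holds for every $j \in \set{0, \ldots, i}$, minimizing the right-hand side over $j$---equivalently maximizing the increment $r_e(j+1) - r_e(j)$---gives exactly $s_e(i) \leq r_e(i+1) - M(i) = u_e(i)$. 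The main obstacle, and the only genuinely nonobvious move, is recognizing that one should combine the telescoped increment bound with the pointwise cap $s_e(j) \leq r_e(j)$ at the very index $j$ that attains $M(i)$; neither inequality alone is tight enough.

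Finally, the convex case follows as a short corollary of the definitions: convexity of $r_e$ means the increments $r_e(j+1) - r_e(j)$ are nondecreasing in $j$, so the maximum defining $M(i)$ is attained at $j = i$, giving $M(i) = r_e(i+1) - r_e(i)$ and hence $u_e(i) = r_e(i+1) - M(i) = r_e(i)$ on the common domain $\set{0, \ldots, |e|-1}$.
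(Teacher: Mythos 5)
Your proposal is correct and follows essentially the same route as the paper: verify admissibility of $b_e$ and $u_e$ directly from the definitions (your check that $u_e(i)\geq 0$ is a detail the paper leaves implicit), then show $u_e$ dominates every admissible $s_e$ by chaining the increment inequality with the pointwise cap $s_e(j)\leq r_e(j)$. The only difference is presentational: you telescope the increments from an arbitrary $j\leq i$ and minimize over $j$, whereas the paper runs an induction on $i$ with a case split on whether $M_e(i)>M_e(i-1)$ --- the two arguments are equivalent, and your unrolled version arguably makes it clearer why the bound $u_e$ is exactly the envelope of all admissible $s_e$.
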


\begin{proof}
The function $b_e$ satisfies the constraints since  $r_e$ is monotonic.

Let us write $M_e(i) = \max_{0 \leq j \leq i} \pr{r_e(j + 1) - r_e(j)}$.
Then 
\[
    u_e(i) = r_e(i + 1) - M_e(i) \leq r_e(i + 1) - (r_e(i + 1) - r_e(i)) = r_e(i)
\]
and
\[
    u_e(i) - u_e(i - 1) = r_e(i + 1) - r_e(i) - (M_e(i) - M_e(i - 1)) \leq r_e(i + 1) - r_e(i),
\]
showing that $u_e$ satisfies the constraints.

To prove the second claim,
assume that $s_e$ satisfies the constraints and assume inductively that $s_e(i - 1) \leq u_e(i - 1)$. If $M_e(i - 1) < M_e(i)$, then
$M_e(i) = r_e(i + 1) - r_e(i)$, and immediately $s_e(i) \leq r_e(i) = u_e(i)$, proving the claim. Otherwise, assume $M_e(i - 1) = M_e(i)$. Then
\begin{align*}
    s_e(i) & \leq s_e(i - 1) + r_e(i + 1) - r_e(i)  \\
    & \leq u_e(i - 1) + r_e(i + 1) - r_e(i) \\
    & = r_e(i + 1) - M_e(i - 1) 
     = r_e(i + 1) - M_e(i) = u_e(i),
\end{align*}
proving the claim.

If $r_e$ is convex, then $M_e(i) = r_e(i + 1) - r_e(i)$ and $u_e(i) = r_e(i)$, proving the last claim.
\qed
\end{proof}

\noindent \textbf{Runtime analysis.}
Finding the node $v$ in Algorithm~\ref{algo:peeling} can be done with a priority queue. Maintaining such structure requires $\bigO{k \cdot \deg(v)}$ updates when removing a single vertex $v$, each taking $\bigO{\log{n}}$ time for a total running time of $\bigO{pk\log{n}}$, where $p = \sum_{e \in E} |e| = \sum_{v \in V} \deg(v)$.

\subsection{Approximations based on projecting to convexity}
Given a set of reward functions $\{r_e \colon e \in E\}$ and a corresponding objective $\max_{S\subseteq V} \Gamma(S)$, another approach to approximating \gdsh{} is to replace each $r_e$ with a nearby convex function $\hat{r}_e$, and maximize a related objective
$\hat{\Gamma}(S) = \frac{1}{|S|} \sum_{e \in E} w(e) \hat{r}_e(|S \cap e|)$. As an initial observation, these objectives differ by at most the maximum ratio between original ($r_e$) and projected ($\hat{r}_e$) rewards. 
\begin{proposition}
\label{prop:hat}
 If $r_e(i) \geq \hat{r}_e(i)$ for every $e \in E$ and every $i \in [|e|] = \{1, \hdots, |e|\}$, then for every $S \subseteq V$ we have $\hat{\Gamma}(S) \le \Gamma(S) \le \rho \cdot \hat{\Gamma}(S)$, 
where $$\rho = \max_{e \in E} \max_{i \in [|e|]} \frac{r_e(i)}{\hat{r}_e(i)}.$$
 \end{proposition}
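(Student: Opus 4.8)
The plan is to prove the two inequalities $\hat{\Gamma}(S) \le \Gamma(S)$ and $\Gamma(S) \le \rho \cdot \hat{\Gamma}(S)$ separately, working directly from the definitions of $\Gamma$ and $\hat{\Gamma}$ as weighted sums of rewards divided by $|S|$. Since both objectives share the common factor $\frac{1}{|S|}$ and the same weights $w(e)$, the entire argument reduces to comparing the summands $w(e)\, r_e(|S \cap e|)$ against $w(e)\, \hat{r}_e(|S \cap e|)$ termwise and then summing.

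First I would establish the left inequality. For each edge $e$, write $i = |S \cap e| \in \{0, 1, \ldots, |e|\}$. The hypothesis gives $r_e(i) \ge \hat{r}_e(i)$ directly for $i \in [|e|]$, and for $i = 0$ both rewards are zero (since reward functions satisfy $r_e(0) = \hat{r}_e(0) = 0$, so the edge contributes nothing on either side). Multiplying by the nonnegative weight $w(e)$ and summing over all $e \in E$ preserves the inequality, and dividing by the positive quantity $|S|$ then yields $\Gamma(S) \ge \hat{\Gamma}(S)$.

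For the right inequality, the idea is to bound each $r_e(i)$ from above in terms of $\hat{r}_e(i)$ using the ratio $\rho$. By definition of $\rho$ as a maximum over all edges and all indices, we have $r_e(i) \le \rho \cdot \hat{r}_e(i)$ for every $e$ and every $i \in [|e|]$; for $i = 0$ the inequality $0 \le \rho \cdot 0$ holds trivially. Again multiplying by $w(e) \ge 0$, summing over $E$, and dividing by $|S|$ gives $\Gamma(S) \le \rho \cdot \hat{\Gamma}(S)$, completing the proof.

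I do not anticipate a genuine obstacle here; the statement is essentially a termwise comparison lifted through a nonnegative-weighted sum and a positive normalization. The only points requiring a moment of care are handling the $i = 0$ case so that the ratio $r_e(i)/\hat{r}_e(i)$ is never evaluated as $0/0$ (resolved by noting that the $i=0$ terms vanish on both sides), and confirming that $\hat{r}_e(i) > 0$ wherever the ratio is taken in the definition of $\rho$ so that $\rho$ is well-defined. If one wishes to be fully rigorous about the latter, one can restrict the maximum to indices where $\hat{r}_e(i) > 0$, since indices with $\hat{r}_e(i) = 0$ force $r_e(i) = 0$ as well (because $0 \le \hat{r}_e(i) \le r_e(i)$ would otherwise be consistent, but the termwise bound $r_e(i) \le \rho \hat{r}_e(i)$ still needs $r_e(i)=0$ there), and such terms contribute nothing to either objective.
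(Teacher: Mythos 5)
Your proof is correct and takes essentially the same approach as the paper's: a termwise comparison of $r_e(|S\cap e|)$ with $\hat{r}_e(|S\cap e|)$ (using the hypothesis for the lower bound and the definition of $\rho$ for the upper bound), lifted through the nonnegative weights and the division by $|S|$. Your additional remarks about the $i=0$ terms and the well-definedness of $\rho$ when $\hat{r}_e(i)=0$ are reasonable edge-case hygiene that the paper simply leaves implicit.
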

\begin{proof}
   The first inequality follows from the assumption that $r_e(i) \geq \hat{r}_e(i)$. The definition of $\rho$ implies that for every $e \in E$ and $i \in [|e|]$, we have ${r}_e(i) \leq \rho \cdot \hat{r}_e(i)$, which yields the second inequality $\Gamma(S) \leq \rho \cdot \hat{\Gamma}(S)$.
\end{proof}
 
This approximation is tight in the sense that we can always construct a hypergraph $H$ with a node set $S$ for which $\Gamma(S) = \rho \cdot \hat{\Gamma}(S)$. 
In more detail, consider a pair of rewards functions $r$ and $\hat{r}$ and let $t = \argmax_{i} \frac{r(i)}{\hat{r}(i)}$. Then construct a $k$-uniform hypergraph $H = (V,E)$ with a node set $S$ satisfying $|e \cap S| \in \{0,t\}$ for every $e \in E$. Use the same reward function $r$ for every edge when defining $\Gamma$, and the reward function $\hat{r}$ for every edge when defining $r$. Then 
\begin{align*}
    \Gamma(S) = \frac{\sum_{e \in E} r(|e \cap S|)}{|S|} = \frac{\sum_{e \in E} \rho \cdot \hat{r}(t)}{|S|} = \rho \cdot \hat{\Gamma}(S).
\end{align*}

Given a non-convex nonnegative monotonic reward function $r \colon [0,k] \rightarrow\mathbb{R}^+$, our goal is then to find a \textit{convex} nonnegative monotonic function $\hat{r} \leq r$ such that $\max_i r(i)/\hat{r}(i)$ is small. This can be cast as a small linear program. 
\begin{align*}
    \maximize_{\hat{r}}  \quad& \kappa   \\
     \text{such that}  \quad & r(i)\kappa  \leq 
 \hat{r}(i) \leq r(i)  & i  = 0, 1,2, \hdots, k  \\
      & 2\hat{r}(i)  \leq \hat{r}(i-1) + \hat{r}(i+1)   &i = 2,3,\hdots,k-1 \\
      & \hat{r}(i+1)  \ge \hat{r}(i) & i  =1,2,\hdots,k-1 
\end{align*}
The resulting approximation factor is given by $\rho = 1/\kappa$. We have dropped the $e$ from the subscript of $r_e$ and $\hat{r}_e$ above since we must solve this generic optimization problem for each edge reward function individually. This problem is equivalent to finding the lower convex hull of the points $\{(0,0), (1, r(1)), \hdots, (k, r(k)) \}$, which can be done in $O(k)$ time~\cite{andrew1979another}. Using the monotonicity of $r$, we can bound the worst-case approximation factor. See Appendix for a proof.
\begin{proposition}
\label{eq:propk}
    Let $r \colon [0,k]  \rightarrow \mathbb{R}^+$ be a monotonic reward function satisfying $r(0) = 0$. There exists a nonnegative monotonic convex function $\hat{r}$ satisfying $\hat{r}(i) \leq r(i) \leq k \cdot \hat{r}(i)$ for every $i \in \{1,2, \hdots, k\}$.
\end{proposition}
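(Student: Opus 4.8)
The plan is to exhibit $\hat{r}$ explicitly as the \emph{lower convex hull} of the points $\{(0,0),(1,r(1)),\dots,(k,r(k))\}$ (equivalently, the greatest convex minorant of $r$ on $\{0,\dots,k\}$), extended by linear interpolation between consecutive breakpoints. This is the natural candidate already flagged in the preceding paragraph, and it is convex by construction and satisfies $\hat{r}(i)\le r(i)$ at every integer point. Its breakpoints are a subset of the original points, so $\hat{r}(p)=r(p)$ at each breakpoint $p$; in particular the extreme points give $\hat{r}(0)=r(0)=0$ and $\hat{r}(k)=r(k)$.

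First I would dispatch the easy structural requirements. Nonnegativity is immediate: since $r(j)\ge 0$ at every sample point, the constant function $0$ is itself a convex minorant of $r$, so the \emph{greatest} convex minorant satisfies $\hat{r}\ge 0$ everywhere. Monotonicity then follows from convexity: the slope of the first linear segment is $\hat{r}(1)-\hat{r}(0)=\hat{r}(1)\ge 0$, and convexity forces the successive segment slopes to be nondecreasing, so every segment has nonnegative slope and $\hat{r}$ is nondecreasing. These steps are routine.

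The crux is the lower bound $r(i)\le k\,\hat{r}(i)$. If $i$ is a breakpoint of the hull, then $\hat{r}(i)=r(i)\le k\,\hat{r}(i)$ holds trivially. Otherwise $i$ lies strictly between two consecutive breakpoints $p<i<q$, and I would write $\hat{r}(i)$ as the convex combination along that segment,
\[
\hat{r}(i) = (1-\lambda)\,r(p) + \lambda\, r(q), \qquad \lambda = \frac{i-p}{q-p}.
\]
Since $r(p)\ge 0$, dropping that term gives $\hat{r}(i)\ge \lambda\, r(q)$. The key estimate is $\lambda\ge 1/k$: because $p$ and $i$ are integers with $i>p$ the numerator $i-p\ge 1$, while the denominator satisfies $q-p\le k$ (both breakpoints lie in $\{0,\dots,k\}$). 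Combining this with monotonicity of $r$ (which gives $r(q)\ge r(i)$ since $i\le q$) yields $\hat{r}(i)\ge \tfrac{1}{k}r(q)\ge \tfrac{1}{k}r(i)$, i.e.\ $r(i)\le k\,\hat{r}(i)$, completing the argument. I expect this weight bound $\lambda\ge 1/k$ — together with the observation that it is exactly the integrality of the breakpoints and the bound $q-p\le k$ that make it work — to be the only nontrivial step; everything else is bookkeeping about the convex hull.
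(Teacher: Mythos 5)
Your proposal is correct and follows essentially the same route as the paper's proof: both construct $\hat{r}$ as the lower convex hull (the paper does so via an explicit minimum-slope breakpoint procedure), and both establish the key bound $r(i)\le k\,\hat{r}(i)$ by dropping the nonnegative left-breakpoint term, using monotonicity of $r$, and observing that the interpolation weight is at least $1/k$ since breakpoints are integers in $\{0,\dots,k\}$.
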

Observe that this approximation is tight. If $r$ is defined by $r(0) = 0$ and $r(x) = 1$ for $x \in (0,k]$, then $\hat{r}(x) = x/k$ and $r(1)/\hat{r}(1) = k$. 
Propositions~\ref{prop:hat} and~\ref{eq:propk} tell us that after performing optimal projections, $\Gamma$ and $\hat{\Gamma}$ differ by at most a factor $1/k$, which leads to the following result.
\begin{theorem}
    A $\beta$-approximate solution to $\max_{S \subseteq V} \hat{\Gamma}(S)$ yields a $\frac{\beta}{k}$-approximate solution for $\max_{S \subseteq V} {\Gamma}(S)$.
\end{theorem}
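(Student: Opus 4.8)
The plan is to prove this by chaining the two propositions that immediately precede it, treating the $\beta$-approximate set $S$ for $\hat{\Gamma}$ as a candidate solution for $\Gamma$ and tracking how its value transfers across the projection. First I would invoke Proposition~\ref{eq:propk} (applied edgewise) to fix, for each edge, a convex monotonic reward $\hat{r}_e$ with $\hat{r}_e(i) \le r_e(i) \le k\cdot\hat{r}_e(i)$; this guarantees that the quantity $\rho = \max_{e}\max_{i} r_e(i)/\hat{r}_e(i)$ appearing in Proposition~\ref{prop:hat} satisfies $\rho \le k$, and it ensures the hypothesis $r_e(i) \ge \hat{r}_e(i)$ of Proposition~\ref{prop:hat} holds, so that $\hat{\Gamma}(S) \le \Gamma(S) \le \rho\,\hat{\Gamma}(S)$ for every $S$.

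Next I would set up the comparison. Let $O$ be an optimal solution for $\max_{S\subseteq V}\Gamma(S)$, and let $S$ be the given $\beta$-approximate solution for $\max_{S\subseteq V}\hat{\Gamma}(S)$, so that $\hat{\Gamma}(S) \ge \beta\,\max_{T\subseteq V}\hat{\Gamma}(T)$. The heart of the argument is the chain
\begin{align*}
\Gamma(S) \;\ge\; \hat{\Gamma}(S) \;\ge\; \beta\,\max_{T\subseteq V}\hat{\Gamma}(T) \;\ge\; \beta\,\hat{\Gamma}(O) \;\ge\; \frac{\beta}{\rho}\,\Gamma(O) \;\ge\; \frac{\beta}{k}\,\Gamma(O).
\end{align*}
Here the first inequality is the left half of Proposition~\ref{prop:hat}; the second is the definition of a $\beta$-approximation; the third simply uses that $O$ is a feasible candidate in the maximization of $\hat{\Gamma}$; the fourth rearranges the right half of Proposition~\ref{prop:hat} (namely $\Gamma(O) \le \rho\,\hat{\Gamma}(O)$) into $\hat{\Gamma}(O) \ge \Gamma(O)/\rho$; and the last uses $\rho \le k$ from the first step. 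Since $\Gamma(O) = \max_{T}\Gamma(T)$, this exhibits $S$ as a $\tfrac{\beta}{k}$-approximate solution for $\max_{S}\Gamma(S)$.

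The argument is essentially bookkeeping, so there is no deep obstacle; the one point that requires care is using each half of Proposition~\ref{prop:hat} in the correct direction on the correct set. I apply the lower bound $\hat{\Gamma}(S)\le\Gamma(S)$ to the returned set $S$, but the upper bound $\Gamma(O)\le\rho\,\hat{\Gamma}(O)$ to the optimum $O$ of the \emph{original} objective, and I must insert the step $\max_T\hat{\Gamma}(T)\ge\hat{\Gamma}(O)$ so that the $\beta$-guarantee (which is stated relative to the $\hat{\Gamma}$-optimum) is transferred to $O$. A minor detail to state explicitly is that the projection is carried out edge-by-edge, so that Proposition~\ref{eq:propk}, proved for a single reward function, yields the global bound $\rho\le k$ over all edges simultaneously.
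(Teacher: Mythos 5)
Your proof is correct and follows essentially the same route as the paper, which derives the theorem directly by combining Proposition~\ref{prop:hat} (with $\rho$ bounded by $k$ via Proposition~\ref{eq:propk}) in exactly the chain of inequalities you write. The paper leaves this chaining implicit, so your explicit bookkeeping — applying the lower bound to the returned set $S$ and the upper bound to the $\Gamma$-optimum $O$ — is a faithful spelled-out version of the intended argument.
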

Since $\hat{\Gamma}$ includes only convex edge rewards, we can optimally solve it using flow-based methods, yielding a $1/k$-approximation for the original objective $\Gamma$. To provide a runtime analysis, we assume all original rewards $r_e(i)$ are integers. After projecting, the new rewards $\hat{r}_e$ are not necessarily integers. However, new rewards can be expressed as rational numbers with denominators that range between 1 and $k$. For a simple runtime analysis, we can scale all new rewards by $k!$ to obtain new convex integer reward functions $r_e' = k! \cdot \hat{r}_e \leq k^k \cdot r_e$. The flow-based approach of \citet{zhou2022extracting} for this integer convex rewards case relies on performing a binary search over the interval $[0, W]$ where $W = \sum_{e \in E} r'_e(|e|)\leq k^k \sum_{e \in E} r_e(|e|)$. This has a runtime of $\bigO{\text{mincut}(p,p\cdot k) \log  W}$ time where $p = \sum_{e \in E} |e|$ is the size of the hypergraph and $\text{mincut}(N,M)$ is the complexity of solving a minimum $s$-$t$ cut problem in a graph with $N$ nodes and $M$ edges. To put this expression into a form that only involves the original rewards $\{r_e \colon e \in E\}$, observe that $\log W \leq k \log k + \log (\sum_{e \in E} r_e(|e|))$.

Note finally that if we project the non-convex rewards and apply the existing $1/k$-approximation greedy peeling algorithm for $\hat{\Gamma}$~\cite{zhou2022extracting}, this is only guaranteed to produce a $1/k^2$-approximate solution using this analysis. This again highlights utility of our peeling algorithms that work directly on the non-convex objective and yield a $1/k$-approximate solution.
 \section{Approximation algorithms for constrained variants}
The approximability of \gdsh{} has immediate implications for the approximability of \card{} and \fair{}. We will now explore these results.

Let us first consider the \card{} problem, where the solution must have at least $\ell$ nodes. Here we will adopt the algorithm by~\citet{khuller2009finding} which was used to solve the constrained variant in regular graphs, and further extended to work with supermodular rewards by~\citet{chekuri2022densest}.

The algorithm, given in Algorithm~\ref{algo:card}, iteratively finds an approximate densest \gdsh{}, say $S_i$ from the current hypergraph, say $H_i$, removes $S_i$ from $H_i$ (while keeping the edges), and adds $S_i$ to the solution $S$, until $S$ is large enough. Then the returned value is either $S$, or $S'$, a set corresponding to $S$ during the previous round, plus padded arbitrary nodes to satisfy the constraint.

Algorithm~\ref{algo:card} requires a subroutine for contracting the discovered set from the current hypergraph.
Given a hypergraph $H = (V, E, w, \{r_e\})$ and a set of nodes $U$ we define a contracted hypergraph $H' = (V', E', w', \{r'_e\}) = \contract(H, U)$ as follows.
The nodes are $V' = V \setminus U$, the hyperedges $E'$ consist of the hyperedges in $E$ with nodes in $U$ removed, and the weights $w'$ correspond to the weights $w$. To define the rewards, let $e \in E$ be a hyperedge and $a = e \setminus U$ be the contracted hyperedge. Let $j = \abs{e \cap U}$. We define the contracted reward as $r'_a(i) = r_e(i + j) - r_e(j)$.

\begin{algorithm}[t]
\caption{Approximates the \card{} problem}
\label{algo:card}
\begin{algorithmic}[1]
\Require Hypergraph $H$, the cardinality constraint $\ell$.
\State $S \define \emptyset$, $H_1 \define H$, $i \define 1$
\While {$\abs{S} < \ell$}
    \State $S_i \define $ (approximate) densest \gdsh{} in $H_i$
    \State $H_{i + 1} \define \contract(H_i, S_i)$
    \State $S \define S \cup S_i$
    \State $i \define i + 1$
\EndWhile
\State $S' \define S_1 \cup \cdots \cup S_{i - 2}$ padded with arbitrary nodes so that $S'$ has $\ell$ nodes
\Ensure either $S$ or $S'$, whichever has the higher density.
\end{algorithmic}
\end{algorithm}

We have the following approximation result, which we prove in Appendix.

\begin{theorem}
\label{thr:card}
Assume that we can $\alpha$-approximate \gdsh, then Algorithm~\ref{algo:card} yields $\frac{\alpha}{\alpha + 1}$ approximation for \card{}. Consequently, using Algorithm~\ref{algo:peeling} together with Algorithm~\ref{algo:card} yields $\frac{1}{k + 1}$ approximation.
\end{theorem}

We can now use Theorem~\ref{thr:card} and the algorithm proposed by~\citet{miyauchi2023densest} to obtain an approximation result for \fair.

\begin{theorem}
\label{thr:fair}
Assume an instance \fair{} with $\set{\ell_i}$ constraints. Let $S$ be the $\alpha$-approximation for \card{} with $\ell = \sum_i \ell_i$.
Let $c_i$ be the number of nodes with color $i$ in $S$.
Let $S'$ be $S$ padded with any $\ell_i - c_i$ nodes of color $i$, for every color $i$. Then $S'$ yields $\alpha/2$-approximation for \fair{}. Consequently, using Algorithms~\ref{algo:peeling}~and~\ref{algo:card} yields $1/(2k + 2)$ approximation.
\end{theorem}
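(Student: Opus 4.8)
The plan is to reduce \fair{} to \card{} by relating a fair solution's structure to the cardinality-constrained objective, then pay a factor of two for the padding step that repairs any color classes that are underrepresented. First I would set up the key quantities: let $O^*$ be an optimal \fair{} solution, so $|O^* \cap C_i| \geq \ell_i$ for every color $i$, which forces $|O^*| \geq \sum_i \ell_i = \ell$. The starting observation is that $O^*$ is a feasible solution to the \card{} problem with constraint $\ell$, so its density $\Gamma(O^*)$ is at most the optimal \card{} density. Since $S$ is an $\alpha$-approximation for \card{}, we have $\Gamma(S) \geq \alpha \cdot \Gamma(O^*)$, where $\Gamma(O^*)$ is the optimal \fair{} value we are trying to approximate.

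The main work is then to argue that the padded set $S'$ loses at most another factor of two relative to $S$. By construction $S \subseteq S'$, and $S'$ adds at most $\sum_i (\ell_i - c_i)$ nodes, where each added node belongs to a color class that was deficient. Since $S$ already satisfies $|S| \geq \ell = \sum_i \ell_i$, and the padding only adds nodes to make each class individually meet its quota, I would bound $|S'| \leq 2|S|$: the added nodes number at most $\sum_i \ell_i = \ell \leq |S|$, so the padded set is at most twice the size of $S$. Because $f$ is monotonic (rewards are nondecreasing in $|e \cap S|$), adding nodes never decreases the numerator, so $f(S') \geq f(S)$. Combining these,
\begin{align*}
\Gamma(S') = \frac{f(S')}{|S'|} \geq \frac{f(S)}{2|S|} = \frac{1}{2}\Gamma(S) \geq \frac{\alpha}{2}\Gamma(O^*),
\end{align*}
which gives the claimed $\alpha/2$-approximation. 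For the final consequence, I would plug in the \card{} guarantee from Theorem~\ref{thr:card}: using Algorithm~\ref{algo:peeling} (an $\alpha = 1/k$ approximation for \gdsh{}) inside Algorithm~\ref{algo:card} yields $\alpha = 1/(k+1)$ for \card{}, and substituting into the $\alpha/2$ bound gives $\frac{1}{2(k+1)} = \frac{1}{2k+2}$.

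The step I expect to be the main obstacle is verifying that $S'$ is genuinely feasible and that the size bound $|S'| \leq 2|S|$ holds cleanly. The subtlety is that the padding nodes are drawn from outside $S$, so I must confirm the hypergraph actually contains enough nodes of each deficient color to complete the quotas, and that counting the added nodes as at most $\ell$ is correct even when some classes in $S$ already overshoot their $\ell_i$. I would also double-check that the reduction to \card{} via $\ell = \sum_i \ell_i$ does not understate the optimal \fair{} density: since every fair-feasible set is cardinality-feasible but not conversely, the \card{} optimum could in principle exceed the \fair{} optimum, and I must ensure the inequality is oriented so that $\Gamma(S) \geq \alpha \Gamma(O^*)$ still follows. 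This is where I would lean on the Miyauchi et al.~framework to justify that the padding-plus-cardinality strategy preserves the guarantee.
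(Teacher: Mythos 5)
Your proposal is correct and is essentially the argument the paper intends: the paper omits the proof, stating it is identical to the one by Miyauchi et al., and your reduction (fair-optimum is \card{}-feasible since the classes partition $V$, so $\Gamma(S)\geq\alpha\,\Gamma(O^*)$; padding adds at most $\sum_i\ell_i=\ell\leq|S|$ nodes, so $|S'|\leq 2|S|$ while monotonicity gives $f(S')\geq f(S)$) is exactly that argument. The feasibility worry you flag is harmless, since any feasible \fair{} instance has $|C_i|\geq\ell_i$ and hence $|C_i\setminus S|\geq\ell_i-c_i$ nodes available for padding.
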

Note that originally this algorithm was designed for standard graphs (i.e., hypergraphs with $k = 2$). However, the proof for Theorem~\ref{thr:fair} is identical to the proof by~\citet{miyauchi2023densest}, and therefore omitted. We conjecture that a better approximation is possible using an approach for normal graphs by~\citet{gajewar2012multi}. We leave exploring this direction as a future work.
 \section{Experiments and Analysis}
We now analyze the performance of algorithms over a variety of hypergraphs using several different convex and non-convex reward functions. We implement all the algorithms in Julia and use publicly available hypergraph datasets. All experiments were conducted on a research server with 1TB of RAM.\!\footnote{The code and Appendix are available at repository: \href{https://github.com/Vedangi/Densest-SWAMP}{The Densest SWAMP Problem}.}

\noindent \textbf{Datasets}. The \textit{contact-high-school} (CHS)~\cite{chodrow2021hypergraph,Mastrandrea-2015-contact} and \textit{contact-primary-school} (CPS)~\cite{chodrow2021hypergraph,Stehl-2011-contact} datasets represent student interactions at a high school and primary school, respectively, with students as nodes and group interactions as hyperedges. \textit{Senate-committees} (SC) and \textit{House-committees} (HC) contain labeled nodes representing US Senate and House members with political party affiliations~\cite{chodrow2021hypergraph}. Here hyperedges denote the committee memberships.
In the \textit{Trivago} hypergraph (Triv), nodes are vacation rentals and hyperedges are rentals clicked during the same user browsing session on \texttt{Trivago.com}. We specifically use a subset of a larger hypergraph~\cite{chodrow2021hypergraph}, defined by considering only vacation rentals in Fukuoka, Japan. We preprocess each hypergraph by eliminating self-loops and dangling nodes, and extracting their largest connected component while preserving multi-edges. 

Hypergraph statistics are shown in Table~\ref{tab:peeling-comparison}. We choose hypergraphs that are small enough so that we can find optimal solutions using the ILP, as a point of comparison for our approximation algorithms.
{
\setlength{\tabcolsep}{4pt}
\begin{table}[t!]
    \centering
        \caption{Edge-reward functions and their definitions}
    \label{tab:reward-functions}
    \begin{subtable}[t]{0.45\linewidth}
        \centering
        \begin{tabular}{cll}
            \toprule
            &\textbf{Function} & \boldmath$r_e$ \\
            \midrule
            1.& atleast-two & 
            \(
              r_e(i) = \mathbf{1}[\,i \ge 2\,]
            \) \\
            2.& atleast-half & 
            \(
              r_e(i) = \mathbf{1}\bigl[\,i \ge \lceil |e|/2 \rceil\,\bigr]
            \) \\
            3.& all-but-one & 
            \(
              r_e(i) = \mathbf{1}[\,i \ge |e|-1\,]
            \) \\
            \bottomrule
        \end{tabular}
    \end{subtable}
    \hspace{0.05\linewidth} 
    \begin{subtable}[t]{0.45\linewidth}
        \centering
        \begin{tabular}{cll}
            \toprule
            &\textbf{Function} & \boldmath$r_e$ \\
            \midrule
            4.& standard & 
            \(
              r_e(i) = \mathbf{1}[\,i = |e|\,]
            \) \\
            5.& quadratic &
            \(
              r_e(i) = i^2/|e|
            \) \\
           6.& square-root &
            \(
              r_e(i) = \sqrt{i}
            \) \\
            \bottomrule
        \end{tabular}
    \end{subtable}
\end{table}
}

\noindent \textbf{Reward functions.} We use a range of edge-reward functions from non-convex to convex. For simplicity, we assume that in a given hypergraph, all edges use the same function to compute $r_e$ and that each edge has weight $1$. The function definitions are presented in Table~\ref{tab:reward-functions}. To avoid trivial solutions (as noted in Theorem~\ref{thr:corner}), we set $r_e(1) = 0$ for functions 1, 2, 3, and 6. This is especially important for 2-node hyperedges while using reward functions 1, 2, and 3.

\noindent\textbf{Algorithms.} 
For finding the optimal solution, we implement the $\exact$ method that iteratively solves the ILP (using Gurobi optimization software) as described in Section~\ref{sec:ilp}.
Instead of using a binary search, we begin with the entire hypergraph and iteratively search for a denser and denser subset until no more improvement is possible, as this tends to converge in 4-5 iterations. We compare $\exact$ against several approximation algorithms.
Algorithm~\ref{algo:peeling} is referred to as as  $\peelmax$ when we set $s_e = u_e$ as specified in Theorem~\ref{thm:max_se}, as $\greedypeel$ when $s_e(i) = r_e(i)$,  and as $\peelzero$ when $s_e(i) = 0$. $\degreepeel$ is the greedy peeling algorithm for the standard densest subhypergraph objective: peeling based solely on the degree (number of incident hyperedges) in the induced hypergraph. For the projection-based approximations, we first project each non-convex $r_e$ onto its nearest convex $\hat{r}_e$. We then run a flow-based method and $\greedypeel$ methods using projected rewards. The projection-plus-flow approach solves the projected problem exactly using a maxflow solver, as described in~\cite{zhou2022extracting}. It comes with a $1/k$-approximation guarantee, while the latter is a $1/k^2$ approximation.

\begin{table}[htbp]
  \centering
  \caption{Objective and runtime values of ILP, peeling and projection based strategies. 
  Dashes indicate that \exact{} did not complete within the allotted time.}
  \label{tab:peeling-comparison}
\begin{tabular*}{\textwidth}{l @{\extracolsep{\fill}} r r r r r r r r r r}
 \toprule
   & \multicolumn{2}{l}{\textbf{CHS}} 
   & \multicolumn{2}{l}{\textbf{CPS}} 
   & \multicolumn{2}{l}{\textbf{SC}}  
   & \multicolumn{2}{l}{\textbf{HC}}  
   & \multicolumn{2}{l}{\textbf{Triv}} \\

   $|V|, |E|, k$ & \multicolumn{2}{l}{327, 7818, 5} & \multicolumn{2}{l}{242, 12704, 5}  & \multicolumn{2}{c}{282, 315, 31}  & \multicolumn{2}{c}{1290, 340, 81}  & \multicolumn{2}{c}{262, 910, 16} \\
   \cmidrule(r){2-3}
   \cmidrule(r){4-5}
   \cmidrule(r){6-7}
   \cmidrule(r){8-9}
   \cmidrule(r){10-11}
   & Obj. & Run & Obj. & Run & Obj. & Run & Obj. & Run & Obj. & Run \\

 \hline
  \multicolumn{11}{c}{\textit{atleast-two}} \\
  \hline
  \exact & 27.078 & 416 & 60.549 & 1760 & -- & -- & -- & -- & 11.53 & 2.4 \\
  \peelmax & 26.871 & 0.15 & 60.02 & 0.42 & 21.0 & 0.89 & 14.0 & 24.01 & 11.09 & 0.06 \\
  \greedypeel & 27.065 & 0.2 & 60.549 & 0.59 & 15.90 & 1.14 & 11.88 & 32.77 & 11.41 & 0.02 \\
  \peelzero & 26.871  & 0.16 & 60.022 & 0.45 & 21.0 & 1.11 & 14.0 & 23.98 & 11.09 & 0.02 \\
  \degreepeel &  26.725& 0.06  & 58.714 & 0.06 & 11.5 & 0.01 & 7.625 & 3.20 & 10.5 & 0.008 \\
  \projflow & 26.451  & 0.16  & 58.34 & 0.21 & 12.22 & 0.07 & 7.7 & 0.43 & 10.76 & 0.02 \\
\projgreedy & 26.451 & 0.2  & 58.34 & 0.58 & 12.22 & 0.86 & 7.625 & 25.57 & 10.76 & 0.02\\
 \hline
  \multicolumn{11}{c}{\textit{atleast-half}} \\
  \hline
  \exact & 27.078 & 515  & 60.549 & 1711 & -- &  -- & -- & -- & 9.625 & 2.33 \\
  \peelmax & 26.871 & 0.16 & 60.02 & 0.43 & 3.0 & 0.89& 1.571 & 23.81  & 9.179 & 0.02 \\
  \greedypeel & 27.065 & 0.2 & 60.549 & 0.6 & 2.775 & 1.02 & 2.0 & 29.07 & 9.56 & 0.02 \\
  \peelzero & 26.871 & 0.16 & 60.022 & 0.43 & 3.0 & 1.10 & 2.38 & 23.76 & 9.51 & 0.02 \\
  \degreepeel & 26.725 & 0.05 & 58.714 & 0.07 & 2.66 & 0.01 & 1.321 & 3.2 & 9.02 & 0.008 \\
  \projflow & 26.451 & 0.18 & 58.34 & 0.21 & 2.32 & 0.07 & 2.28 & 0.39 & 9.378 & 0.03 \\
\projgreedy & 26.451 & 0.19 & 58.344 & 0.59 & 2.32 & 0.86 & 2.23 & 25.36 & 9.378 & 0.02 \\
 \hline
   
  \multicolumn{11}{c}{\textit{all-but-one}} \\
  \hline
  \exact & 26.926 & 249.9 & 60.16 & 1716 & 1.83 & 6794 & 1.22 & 100.97 & 7.769 & 1.39 \\
  \peelmax & 26.784 & 0.16 & 59.66 & 0.44 & 1.66 & 0.91& 1.07 & 24.32 & 7.725 & 0.02 \\
  \greedypeel & 26.867 & 0.2 & 60.16 & 0.6 & 1.8 & 0.95 & 1.105 & 25.62 & 7.763 & 0.02 \\
  \peelzero & 26.784 & 0.16 & 59.66 & 0.44 & 1.66 & 1.12 & 1.07 & 24.29 & 7.725 & 0.02 \\
  \degreepeel & 26.593 & 0.05 & 58.47 & 0.08 & 1.66 & 0.01  & 0.936 & 3.23 & 7.375 & 0.008 \\
\projflow & 26.424 & 0.28 & 58.26 & 0.32 & 1.233 & 0.05 & 0.976 & 0.32 & 7.34 & 0.02 \\
\projgreedy & 26.424 & 0.19 & 55.266 & 0.6 & 1.233 & 0.88 & 1.0 & 24.31 & 7.36 & 0.02\\
 \hline

  \multicolumn{11}{c}{\textit{standard}} \\
  \hline
    \exactflow & 25.597 & 0.24 & 54.475 &0.33  & 1.176 & 0.13 & 0.823 & 0.5 & 5.52 & 0.10 \\

  \greedypeel & 25.575 & 0.15 & 54.475 & 0.4 & 1.176 & 0.88 & 0.77 & 24.87 & 5.52 & 0.02 \\
  \peelzero & 25.575 & 0.15 & 54.475 & 0.43 & 1.176 & 1.13 & 0.77 & 24.82 & 5.52 & 0.02 \\
  \degreepeel & 25.581 & 0.15 & 54.475 & 0.14 & 1.176 & 0.04 & 0.794 & 3.22 & 5.52 & 0.04 \\
 \hline

  \multicolumn{11}{c}{\textit{quadratic}} \\
  \hline
   \exactflow & 71.45 & 0.44 & 145.47 & 0.68 & 26.91 & 0.25 & 12.52 & 1.33 & 31.10 & 0.08 \\

  \greedypeel & 71.34 & 0.15 & 145.377 & 0.42 & 26.91 & 0.84 & 12.52 & 23.50 & 31.10 & 0.02 \\
  \peelzero & 70.874 & 0.16 & 145.06 & 0.41 & 26.78 & 1.09 & 12.25 & 23.14 & 31.10 & 0.02 \\
  \degreepeel & 69.85 & 0.06 & 143.70 & 0.07 & 25.24 & 0.01 & 11.59 & 3.20 & 29.50 & 0.008 \\
   \hline
   
  \multicolumn{11}{c}{\textit{square-root}} \\
  \hline
  \exact & 41.34 & 495  & 91.7 & 4469 & -- & -- & -- & -- & 17.15 & 17.22  \\
  \peelmax & 41.06 & 0.16 & 91.21 & 0.42  & 29.69 & 0.89 & 20.48 & 24.02 & 16.86 & 0.02 \\
  \greedypeel & 41.14 & 0.2 & 91.72 & 0.58 & 29.26 & 0.87 & 18.38 & 24.99 & 17.14 & 0.02 \\
  \peelzero & 41.05  & 0.15 & 90.83 & 0.42 & 29.69 & 1.58 & 17.67 & 23.23 & 16.27 & 0.02 \\
  \degreepeel & 41.06 & 0.06  & 89.92 & 0.06 & 19.64  & 0.01 & 11.94 & 3.28 & 16.16 &0.008  \\
  \projflow & 40.86  & 0.29  & 89.76 & 0.35 & 21.88 & 0.2  & 13.6 & 0.97 & 16.35 & 0.11  \\
\projgreedy & 40.61 & 0.2  & 89.76 & 0.58 & 21.88 &   0.86  & 12.80 &24.76 & 16.35 & 0.02 \\
 \bottomrule
\end{tabular*}
\end{table}

{
\setlength{\tabcolsep}{3pt}
\begin{table}[!ht]
    \centering
    \caption{Evaluating edge composition of optimal densest \gdsh{} solutions using five reward functions on two hypergraphs: CHS and CPS.}
    \label{tab:obj-comparison}
    \begin{tabular}{l cc cc cc cc cc}
        \toprule
        & \multicolumn{2}{l} {\texttt{atleast-two} }&  \multicolumn{2}{l}{\texttt{atleast-half}} &  \multicolumn{2}{l}{\texttt{all-but-one}} &  \multicolumn{2}{c}{$\mathtt{|E(S)|}$} &  \multicolumn{2}{c}{$\mathtt{|S|}$} \\ 
        \cmidrule(r){2-3}
        \cmidrule(r){4-5}
        \cmidrule(r){6-7}
        \cmidrule(r){8-9}
        \cmidrule(r){10-11}
        & CHS &CPS &CHS &CPS &CHS &CPS &CHS &CPS &CHS &CPS\\
      \hline
        \textit{atleast-two} &3791  & 7932 &3791  & 7932 &3762  & 7956  & 3053 & 6052  &140 &131 \\ 
       
        \textit{atleast-half} & 3791 & 7932 &3791 & 7932 &3762  &  7956 &3053 & 6052 &140  &131  \\ 
       
        \textit{all-but-one}& 4055 &8349  &4055  &8349  & 4039 &8303 & 3404 &6574 &150 &138 \\ 
        
        \textit{standard} & 6182 & 11341 &6182  & 11341& 6175 &11337 & 6041 & 10895& 236 &200 \\ 
      
        \textit{quadratic} & 757 & 4206 &757  &4206  &755 &4152 & 651 & 3040& 29 &75   \\ 
        
        \bottomrule
    \end{tabular}
\end{table}
}

\noindent\textbf{Performance analysis.} Table~\ref{tab:peeling-comparison} shows runtimes and objective values for our methods. As a first observation, we see that our projection-based flow method is exceptionally fast. Even the peeling methods, although not optimized for runtime, are still orders of magnitude faster than exactly solving the objective using the ILP solver.  For instance, in the CPS hypergraph with the objective using \textit{atleast-two} reward function for every edge, the ILP approach requires approximately 29 minutes, whereas  $\greedypeel$ and \projflow{} find the densest solution in under a second. Furthermore, our approximation algorithms all produce approximation ratios in practice that are close to 1, showing that these methods far exceed their theoretical guarantees and produce nearly optimal solutions. This illustrates that even for cases where \gdsh{} is \np-hard, peeling methods can provide a fast and accurate approach in practice, comparable to the success of peeling methods for poly-time solvable variants. We note also that for these hypergraphs and reward functions, applying direct peeling methods tends to give a slightly better approximation than projecting to nearby convex rewards. As another interesting observation, the standard greedy peeling method (setting $s_e = r_e$) usually produces the best approximation results among peeling methods, despite the fact that our approximation guarantees do not apply to this approach.

\noindent\textbf{Qualitative comparison.} As a final point of comparison, Table~\ref{tab:obj-comparison} reports qualitative aspects of the dense subsets produced by exactly solving \gdsh{} (using \exact{}) with different reward functions on two hypergraphs (CFS and CPS).
 
The columns \texttt{atleast-two}, \texttt{atleast-half}, and \texttt{all-but-one} list the number of hyperedges (fully or partially contained in $S$) that intersect $S$ in at least two nodes, at least $\ceil{|e|/2}$ nodes, or at least $|e|-1$ nodes, respectively. We also report the 
number of edges completely contained ($|E(S)|$) and the subset size ($|S|$).
 
Each reward function leads to a dense subset with distinct characteristics:
for instance, the standard objective tends to produce larger subgraphs whereas the quadratic objective tends to produce smaller subgraphs.

This provides a simple check that finding the densest \gdsh{} using different reward functions does indeed allow us to capture meaningfully different types of density patterns in practice.

\section{Conclusions and Discussion}
We have presented comprehensive hardness results and new approximation algorithms for a dense subhypergraph objective where rewards are given for partially included edges. Our most significant finding is that peeling algorithms can achieve the same approximation guarantee for \np-hard variants of the problem as they do for poly-time variants that are special cases of the densest supermodular subset problem. This is somewhat surprising given that previous approximations for peeling seem to inherently rely on the supermodularity property. As one interesting observation, our theory does not apply to the standard greedy peeling strategy ($s_e = r_e$), but this strategy still seems to work well in practice. One open direction is to explore whether there are cases where the standard greedy strategy performs poorly, or whether we can prove an approximation for this approach using a different technique. Another direction for future work is to explore hardness of approximation results for \np-hard variants of \gdsh{}. 

\begin{credits}
\subsubsection{\ackname} This research is supported by the Academy of Finland project MALSOME (343045) and by the Helsinki Institute for Information Technology (HIIT). The research is also supported by the Army Research Office (ARO) under Award Number W911NF-24-1-0156. The views and conclusions contained in this document are those of the authors and should not be interpreted as representing the official policies, either expressed or implied, of the Army Research Office or the U.S. Government.

\subsubsection{\discintname}
The authors have no competing interests to declare that are
relevant to the content of this article.
\end{credits}
%
%
%
\bibliographystyle{splncs04nat}
\bibliography{bibs/dsh}

\begin{thebibliography}{26}
\providecommand{\natexlab}[1]{#1}
\providecommand{\url}[1]{\texttt{#1}}
\providecommand{\urlprefix}{URL }
\expandafter\ifx\csname urlstyle\endcsname\relax
  \providecommand{\doi}[1]{doi:\discretionary{}{}{}#1}\else
  \providecommand{\doi}{doi:\discretionary{}{}{}\begingroup \urlstyle{rm}\Url}\fi

\bibitem[{Anagnostopoulos et~al.(2020)Anagnostopoulos, Becchetti, Fazzone, Menghini, and Schwiegelshohn}]{anagnostopoulos2020spectral}
Anagnostopoulos, A., Becchetti, L., Fazzone, A., Menghini, C., Schwiegelshohn, C.: Spectral relaxations and fair densest subgraphs. In: CIKM, pp. 35--44 (2020)

\bibitem[{Andersen and Chellapilla(2009)}]{andersen2009finding}
Andersen, R., Chellapilla, K.: Finding dense subgraphs with size bounds. In: WAW, pp. 25--37 (2009)

\bibitem[{Andrew(1979)}]{andrew1979another}
Andrew, A.M.: Another efficient algorithm for convex hulls in two dimensions. Information Processing Letters \textbf{9}(5), 216--219 (1979)

\bibitem[{Asahiro et~al.(2000)Asahiro, Iwama, Tamaki, and Tokuyama}]{asahiro2000greedily}
Asahiro, Y., Iwama, K., Tamaki, H., Tokuyama, T.: Greedily finding a dense subgraph. Journal of Algorithms \textbf{34}(2), 203--221 (2000)

\bibitem[{Bera et~al.(2022)Bera, Bhattacharya, Choudhari, and Ghosh}]{bera2022new}
Bera, S.K., Bhattacharya, S., Choudhari, J., Ghosh, P.: A new dynamic algorithm for densest subhypergraphs. In: TheWebConf, pp. 1093--1103 (2022)

\bibitem[{Charikar(2000)}]{charikar2000greedy}
Charikar, M.: Greedy approximation algorithms for finding dense components in a graph. In: International Workshop on Approximation Algorithms for Combinatorial Optimization, pp. 84--95, Springer (2000)

\bibitem[{Chekuri and Quanrud(2022)}]{chekuri20221}
Chekuri, C., Quanrud, K.: $(1-\epsilon)$-approximate fully dynamic densest subgraph: linear space and faster update time. arXiv:2210.02611  (2022)

\bibitem[{Chekuri et~al.(2022)Chekuri, Quanrud, and Torres}]{chekuri2022densest}
Chekuri, C., Quanrud, K., Torres, M.R.: Densest subgraph: Supermodularity, iterative peeling, and flow. In: SODA, pp. 1531--1555 (2022)

\bibitem[{Chen and Tsourakakis(2022)}]{chen2022antibenford}
Chen, T., Tsourakakis, C.: Antibenford subgraphs: Unsupervised anomaly detection in financial networks. In: KDD, pp. 2762--2770 (2022)

\bibitem[{Chodrow et~al.(2021)Chodrow, Veldt, and Benson}]{chodrow2021hypergraph}
Chodrow, P.S., Veldt, N., Benson, A.R.: Hypergraph clustering: from blockmodels to modularity. Science Advances  (2021)

\bibitem[{Fratkin et~al.(2006)Fratkin, Naughton, Brutlag, and Batzoglou}]{fratkin2006motifcut}
Fratkin, E., Naughton, B.T., Brutlag, D.L., Batzoglou, S.: Motifcut: regulatory motifs finding with maximum density subgraphs. Bioinformatics \textbf{22}(14), e150--e157 (2006)

\bibitem[{Gajewar and Das~Sarma(2012)}]{gajewar2012multi}
Gajewar, A., Das~Sarma, A.: Multi-skill collaborative teams based on densest subgraphs. In: SDM, pp. 165--176 (2012)

\bibitem[{Goldberg(1984)}]{goldberg1984finding}
Goldberg, A.V.: Finding a maximum density subgraph. Tech Report, UC Berkeley  (1984)

\bibitem[{Hu et~al.(2017)Hu, Wu, and Chan}]{hu2017maintaining}
Hu, S., Wu, X., Chan, T.H.: Maintaining densest subsets efficiently in evolving hypergraphs. In: CIKM, pp. 929--938 (2017)

\bibitem[{Huang and Kahng(1995)}]{huang1995clusters}
Huang, D.H., Kahng, A.B.: When clusters meet partitions: new density-based methods for circuit decomposition. In: ED\&TC, pp. 60--64 (1995)

\bibitem[{Huang et~al.(2024)Huang, Gleich, and Veldt}]{huang2024densest}
Huang, Y., Gleich, D.F., Veldt, N.: Densest subhypergraph: Negative supermodular functions and strongly localized methods. In: TheWebConf, pp. 881--892 (2024)

\bibitem[{Kariotakis et~al.(2025)Kariotakis, Sidiropoulos, and Konar}]{kariotakis2025fairness}
Kariotakis, E., Sidiropoulos, N.D., Konar, A.: Fairness-aware dense subgraph discovery. Transactions on Machine Learning Research  (2025), ISSN 2835-8856

\bibitem[{Khuller and Saha(2009)}]{khuller2009finding}
Khuller, S., Saha, B.: On finding dense subgraphs. In: ICALP, pp. 597--608 (2009)

\bibitem[{Lanciano et~al.(2024)Lanciano, Miyauchi, Fazzone, and Bonchi}]{lanciano2024survey}
Lanciano, T., Miyauchi, A., Fazzone, A., Bonchi, F.: A survey on the densest subgraph problem and its variants. ACM Computing Surveys \textbf{56}(8), 1--40 (2024)

\bibitem[{Mastrandrea et~al.(2015)Mastrandrea, Fournet, and Barrat}]{Mastrandrea-2015-contact}
Mastrandrea, R., Fournet, J., Barrat, A.: Contact patterns in a high school: A comparison between data collected using wearable sensors, contact diaries and friendship surveys. {PLOS} {ONE} \textbf{10}(9), e0136497 (2015)

\bibitem[{Miyauchi et~al.(2023)Miyauchi, Chen, Sotiropoulos, and Tsourakakis}]{miyauchi2023densest}
Miyauchi, A., Chen, T., Sotiropoulos, K., Tsourakakis, C.E.: Densest diverse subgraphs: How to plan a successful cocktail party with diversity. In: KDD, pp. 1710--1721 (2023)

\bibitem[{Rangapuram et~al.(2013)Rangapuram, B{\"u}hler, and Hein}]{rangapuram2013towards}
Rangapuram, S.S., B{\"u}hler, T., Hein, M.: Towards realistic team formation in social networks based on densest subgraphs. In: WWW, pp. 1077--1088 (2013)

\bibitem[{Stehl{\'{e}} et~al.(2011)Stehl{\'{e}}, Voirin, Barrat, Cattuto, Isella, Pinton, Quaggiotto, den Broeck, R{\'{e}}gis, Lina, and Vanhems}]{Stehl-2011-contact}
Stehl{\'{e}}, J., Voirin, N., Barrat, A., Cattuto, C., Isella, L., Pinton, J.F., Quaggiotto, M., den Broeck, W.V., R{\'{e}}gis, C., Lina, B., Vanhems, P.: High-resolution measurements of face-to-face contact patterns in a primary school. {PLoS} {ONE} \textbf{6}(8), e23176 (2011)

\bibitem[{Tsourakakis(2015)}]{tsourakakis2015k}
Tsourakakis, C.: The k-clique densest subgraph problem. In: Proceedings of the 24th international conference on world wide web, pp. 1122--1132 (2015)

\bibitem[{Veldt et~al.(2021)Veldt, Benson, and Kleinberg}]{veldt2021generalized}
Veldt, N., Benson, A.R., Kleinberg, J.: The generalized mean densest subgraph problem. In: KDD, pp. 1604--1614 (2021)

\bibitem[{Zhou et~al.(2022)Zhou, Hu, and Sheng}]{zhou2022extracting}
Zhou, Y., Hu, S., Sheng, Z.: Extracting densest sub-hypergraph with convex edge-weight functions. In: International Conference on Theory and Applications of Models of Computation, pp. 305--321, Springer (2022)

\end{thebibliography}



\appendix
\section*{Supplementary Material for the Densest SWAMP problem}
\addcontentsline{toc}{section}{Appendix}
\section{NP-hardness of $\gdsh(r)$}

In this section we will prove the following result.

\begin{theorem}
\label{prop:nphard}
Assume $r$ such that for some positive integers $i$ and $j$ we have $r(i)/i > r(1)$ and $2r(j) > r(j-1) + r(j+1)$. 
Then $\ugdsh(r)$ is \np-hard.
\end{theorem}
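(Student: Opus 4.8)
The plan is to reduce from a known NP-hard problem—most naturally the densest-at-least-$\ell$ subgraph problem or a clique/independent-set variant—by building a hypergraph whose structure forces the reward function $r$ to ``detect'' a hidden combinatorial object. The two hypotheses give us exactly two structural levers to exploit. The condition $r(i)/i > r(1)$ (for some $i$) tells us that the reward function is \emph{superadditive at scale $i$}: packing $i$ nodes of an edge into $S$ yields more reward per node than spreading them out, so the optimum genuinely prefers to concentrate nodes into edges rather than scatter them (ruling out the trivial single-node solution of Theorem~\ref{thr:corner}). The condition $2r(j) > r(j-1) + r(j+1)$ says $r$ is \emph{strictly concave at $j$}, i.e., there is a local ``kink'' where a marginal gain $r(j)-r(j-1)$ strictly exceeds the next marginal gain $r(j+1)-r(j)$. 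This non-convexity is precisely what Theorem~\ref{thr:poly} forbids in the poly-time case, so it is the feature the reduction must leverage to encode a hard decision.

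First I would fix the two witnessing integers $i$ and $j$ as constants depending only on $r$, and design \emph{gadget edges} of two kinds: ``concentration'' edges built so that an optimal solution is rewarded for fully capturing groups of size exactly $i$ (using the superadditivity), and ``threshold'' edges of size around $j$ engineered so that the concave kink makes it strictly advantageous to include exactly $j$ of their endpoints rather than $j\pm 1$. The idea is to encode a candidate combinatorial selection (e.g., a vertex cover, a clique, or a subset hitting certain constraints) as the choice of which nodes enter $S$, and to calibrate the edge weights $w(e)$ so that the density $\Gamma(S)$ crosses a target threshold $\alpha$ if and only if the underlying instance has a solution. Because $\ugdsh(r)$ applies the \emph{same} $r$ to every edge, the main engineering difficulty is that I cannot freely shape the reward per edge; I can only control edge \emph{sizes} (by padding edges with auxiliary/dummy nodes that are always excluded or always included) and edge \emph{weights}. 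The trick will be to use dummy nodes to shift any edge's effective operating point onto the interval where $r$ exhibits the desired local behavior near $i$ or near $j$.

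\textbf{The main obstacle} will be the \emph{combined} use of both hypotheses simultaneously: a generic reduction proof must work for \emph{every} non-convex monotone $r$ passing the corner-case test of Theorem~\ref{thr:corner}, so I cannot assume a convenient closed form. The delicate step is showing that the superadditive witness $i$ and the concave witness $j$ can be made to coexist in a single instance—that padding an edge to size $j$ to exploit the kink does not destroy the global incentive (from the $i$-condition) to concentrate nodes, and vice versa. Concretely, I expect to argue that the $r(i)/i > r(1)$ condition guarantees the optimum is ``large'' (not a single node or a scattered set), which lets me restrict attention to solutions where the concave kink at $j$ is the binding local trade-off; then a careful weight calibration makes the existence of a high-density set equivalent to a decision in the source NP-hard problem.

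I would then complete the argument by verifying both directions of the reduction: a ``yes'' instance of the source problem yields a set $S$ whose density provably exceeds $\alpha$ (summing the gadget rewards, using the strict inequalities to get a strict margin), and conversely any $S$ with $\Gamma(S) > \alpha$ must, by the forced structure of the gadgets, induce a valid solution in the source instance. The strictness in both hypotheses is what provides the quantitative gap needed to separate ``yes'' from ``no'' instances robustly, and I would keep the gadget sizes polynomial in the source instance (with $i,j$ treated as fixed constants of $r$) so the reduction runs in polynomial time.
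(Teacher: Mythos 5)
Your proposal correctly identifies the high-level strategy that the paper actually uses — a reduction from an NP-hard selection problem (the paper uses independent set in 3-regular graphs), with one family of gadgets exploiting the superadditivity witness $i$ and another exploiting the concavity kink at $j$, calibrated so that density tracks the size of the hidden combinatorial object. However, what you have written is a plan, not a proof: every technically hard step is deferred with phrases like ``I expect to argue'' and ``a careful weight calibration,'' and those are precisely the steps where the difficulty lives. Concretely, you never construct the gadgets, never choose the weights, and never prove the structural lemmas that force an optimal solution to have the shape your reduction needs. The paper must introduce a large auxiliary node set $S$ of size $\ell$ in which \emph{every} $i$-subset is a hyperedge of carefully chosen weight $\alpha$, and then prove (via a mediant-inequality argument and explicit bounds on the two parts of the objective) that any optimal solution $O$ satisfies $S \subseteq O$. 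This anchoring lemma is what defeats the ratio denominator — without it, nothing prevents the optimum from dropping most of the instance and the ``only if'' direction of your reduction collapses. Your remark that $r(i)/i > r(1)$ ``guarantees the optimum is large'' gestures at this but is not a proof; the actual argument needs the quantity $\theta = r(i) - i\,r(1) > 0$ and a comparison of $\Gamma(O \cup S)$ against $\Gamma(O)$ with explicit constants $b$ and $k$ chosen as functions of $r$.

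The second missing piece is the argument that the concavity at $j$ forces independence. In the paper, each edge $(u,v)$ of the source graph becomes $k$ hyperedges of size $j+1$ containing $u$, $v$, and $j-1$ fresh auxiliary nodes; once $S \subseteq O$ is established, including both $u$ and $v$ moves these edges from intersection size $j$ to $j+1$, and the local-optimality inequality of Lemma~\ref{lem:opt} combined with $\delta(j) > \delta(j+1)$ yields a contradiction. Your sketch asserts that the kink ``makes it strictly advantageous to include exactly $j$'' endpoints but does not show that this local preference survives in the global ratio objective, nor does it verify the final monotonicity claim that, among solutions of the forced structure, density is strictly increasing in the number of selected original nodes. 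These verifications require the explicit parameter choices ($k$ large relative to $R/\eta$ and $R/\theta$, $b$ large relative to $\Delta$) and are the substance of the hardness proof; as it stands, the proposal would not be accepted as a proof of the theorem.
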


We can safely assume that $i$ is the smallest index for which $r(i)/i > r(1)$.

Let us write $\delta(a) = r(a) - r(a - 1)$. Define $\theta = r(i) - ir(1)$ and $\eta = \delta(j) - \delta(j+1)$.

Let $R = \max r$ and $\Delta = \max \delta$.

We will prove the hardness by a reduction from the independent set problem in 3-regular graphs.
Assume we are given a 3-regular graph $H = (W, A)$ with $n$ nodes and $m$ edges.

We construct $G = (V, E)$ as follows. The nodes consists of the original nodes $W$ and $\ell = k(m (j - 1) +kbn(i - 1))$ nodes $S$, where
\[
    b = \ceil{\frac{4\Delta - r(1)}{\delta(i) - r(1)}}
    \quad\text{and}\quad
    k = \ceil{\frac{4R}{\eta} + \frac{6R}{\theta}}.
\]

Each subset of $j$ nodes in $S$ is connected with a hyperedge with a weight
\[
    \alpha = \beta \frac{\ell}{{\ell \choose i} r(i) } = \frac{\beta i}{{\ell - 1 \choose i - 1}r(i)},
    \quad\text{where}\quad
    \beta = k \frac{5\delta(j) + \delta(j + 1)}{2} + kb\delta(i).
\]
We will denote this set of hyperedges with $E_1$.
For each $(u, v) \in A$ we
connect $u$ and $v$ with $k(j - 1)$ unique nodes in $S$ with $k$ hyperedges, each of size $j + 1$.
For each $u \in W$ we connect $u$ with $kb(i - 1)$ unique nodes in $S$ with $kb$ hyperedges, each of size $i$.
We will denote this set of hyperedges with $E_2$, and the subset of the latter edges with $E'_2$.

Let us write 
\[
    \score_1(X) = \sum_{e \in E_1} r(\abs{X \cap e})
    \quad\text{and}\quad
    \score_2(X) = \sum_{e \in E_2} r(\abs{X \cap e}).
\]

Let $O$ be the optimal solution. Let $T = O \cap S$, and define $d = \abs{T}$ and $c = \abs{O \cap W}$.

We will prove the result as a sequence of lemmas.

\begin{lemma}[Mediant inequality]
\label{lem:frac}
Assume 4 positive numbers $a$, $b$, $c$, and $d$.
Then 
\[
\frac{a+c}{b+d} \geq \frac{a}{b} \iff \frac{c}{d} \geq \frac{a}{b} \iff \frac{c}{d} \geq \frac{a+c}{b+d}.
\]
\end{lemma}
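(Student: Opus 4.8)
The plan is to reduce all three inequalities in the chain to the single cross-multiplied inequality $ad \le bc$ (equivalently $a/b \le c/d$), and observe that positivity of $b$, $d$, and $b+d$ lets me clear denominators freely without flipping inequality signs. So the entire statement collapses to showing that each of the three displayed inequalities is equivalent to $ad \le bc$.

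First I would prove the leftmost equivalence. Since $b>0$ and $b+d>0$, the inequality $\frac{a+c}{b+d} \ge \frac{a}{b}$ is equivalent, after multiplying both sides by $b(b+d)>0$, to $b(a+c) \ge a(b+d)$, i.e. $ab+bc \ge ab+ad$, which simplifies to $bc \ge ad$, i.e. $\frac{c}{d} \ge \frac{a}{b}$. Next I would handle the rightmost equivalence the same way: since $d>0$ and $b+d>0$, the inequality $\frac{c}{d} \ge \frac{a+c}{b+d}$ is equivalent to $c(b+d) \ge d(a+c)$, i.e. $cb+cd \ge da+dc$, which again simplifies to $bc \ge ad$, i.e. $\frac{c}{d} \ge \frac{a}{b}$. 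Both outer inequalities therefore coincide with the middle one, establishing the full chain of equivalences.

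There is no genuine obstacle here; the result is the classical mediant (Stern–Brocot) inequality, and the only point requiring care is the sign bookkeeping when clearing denominators, which is exactly what the positivity hypothesis on $a,b,c,d$ guarantees. I would simply note that each "$\iff$" is a reversible algebraic manipulation because every quantity multiplied through ($b$, $d$, $b+d$, and their products) is strictly positive, so no inequality direction is reversed in either direction of the argument.

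\begin{proof}
Since $a,b,c,d>0$, all of $b$, $d$, and $b+d$ are positive, so we may clear denominators without changing the direction of any inequality. For the first equivalence, multiplying $\frac{a+c}{b+d} \ge \frac{a}{b}$ through by $b(b+d)>0$ gives $b(a+c) \ge a(b+d)$, i.e. $bc \ge ad$, i.e. $\frac{c}{d} \ge \frac{a}{b}$. For the second, multiplying $\frac{c}{d} \ge \frac{a+c}{b+d}$ through by $d(b+d)>0$ gives $c(b+d) \ge d(a+c)$, i.e. $cb \ge da$, again equivalent to $\frac{c}{d} \ge \frac{a}{b}$. Hence all three inequalities are equivalent to $ad \le bc$, proving the claim.\qed
\end{proof}
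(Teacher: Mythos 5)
Your proof is correct and follows essentially the same route as the paper: both arguments clear denominators (using positivity of $b$, $d$, and $b+d$) to reduce each of the two outer inequalities to the common cross-multiplied condition $bc \geq ad$, i.e.\ $\frac{c}{d} \geq \frac{a}{b}$. No gaps; the reasoning matches the paper's proof step for step.
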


\begin{proof}
\begin{align*}
\frac{a+c}{b+d} \geq \frac{a}{b} \iff \frac{a+c}{b+d} - \frac{a}{b} \geq 0 \iff \frac{b(a+c)-(b+d)a}{b(b+d)} \geq 0 \\
\iff ba+bc-ba-da \geq 0 \iff bc \geq da \iff \frac{c}{d} \geq \frac{a}{b}.
\end{align*}
Similarly,
\begin{align*}
\frac{c}{d} \geq \frac{a+c}{b+d} \iff \frac{c}{d} - \frac{a+c}{b+d} \geq 0 \iff \frac{(b+d)c-d(a+c)}{b(b+d)} \geq 0 \\
\iff bc+dc-da-dc \geq 0 \iff bc \geq da \iff \frac{c}{d} \geq \frac{a}{b},
\end{align*}
proving the claim.
\qed
\end{proof}

\begin{lemma}
\label{lem:opt}
Let $O$ be the densest subgraph. Let $X \subsetneq O$. Then
\[
    \frac{\score(O) - \score(O - X)}{\abs{X}} \geq \dens(O).
\]
Let $Y$ be such that $Y \cap O = \emptyset$. Then
\[
    \frac{\score(O \cup Y) - \score(O)}{\abs{Y}} \leq \dens(O).
\]
\end{lemma}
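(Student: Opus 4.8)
The plan is to derive both inequalities directly from the optimality of $O$ combined with the mediant inequality (Lemma~\ref{lem:frac}). The common idea is to split the density ratio $\dens(O) = \score(O)/\abs{O}$ into a piece coming from the perturbed set and a piece coming from the perturbation itself, and then compare the resulting ratios using the three-way equivalence in Lemma~\ref{lem:frac}.

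For the first inequality, I would start from the observation that $O - X$ is a valid nonempty candidate set (nonempty because $X \subsetneq O$), so optimality of $O$ gives $\dens(O) \geq \dens(O - X)$. Now set $a = \score(O - X)$, $b = \abs{O} - \abs{X} = \abs{O - X}$, $c = \score(O) - \score(O - X)$, and $d = \abs{X}$, so that $a + c = \score(O)$ and $b + d = \abs{O}$. The optimality condition $\dens(O) \geq \dens(O - X)$ is then exactly $\frac{a+c}{b+d} \geq \frac{a}{b}$, and the equivalence in Lemma~\ref{lem:frac} yields $\frac{c}{d} \geq \frac{a+c}{b+d}$, which is precisely $\frac{\score(O) - \score(O - X)}{\abs{X}} \geq \dens(O)$.

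For the second inequality, I would instead use that $O \cup Y$ is a valid candidate set, so optimality gives $\dens(O) \geq \dens(O \cup Y)$. Since $Y \cap O = \emptyset$ we have $\abs{O \cup Y} = \abs{O} + \abs{Y}$. Setting $a = \score(O)$, $b = \abs{O}$, $c = \score(O \cup Y) - \score(O)$, and $d = \abs{Y}$ makes $a + c = \score(O \cup Y)$ and $b + d = \abs{O \cup Y}$. The optimality condition now reads $\frac{a+c}{b+d} \leq \frac{a}{b}$; applying Lemma~\ref{lem:frac} with the inequalities reversed (valid since the stated equivalences are biconditionals whose underlying computation $bc \geq da \iff \frac{c}{d} \geq \frac{a}{b}$ flips consistently) gives $\frac{c}{d} \leq \frac{a+c}{b+d} \leq \frac{a}{b} = \dens(O)$, i.e. the claimed bound $\frac{\score(O \cup Y) - \score(O)}{\abs{Y}} \leq \dens(O)$.

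There is no serious obstacle here; the only points requiring care are (i) checking that all denominators are positive so the ratios and the application of Lemma~\ref{lem:frac} are legitimate---$\abs{O - X} > 0$ since $X \subsetneq O$, and $\abs{X}, \abs{Y} > 0$ for nontrivial perturbations, while the numerators $\score(\cdot)$ are nonnegative by assumption on $r$ (the cross-multiplication step of Lemma~\ref{lem:frac} only needs positive denominators), and (ii) matching the correct assignment of $a,b,c,d$ to the correct direction of the equivalence, since the first inequality uses the ``$\geq$'' chain of Lemma~\ref{lem:frac} and the second uses its reversal.
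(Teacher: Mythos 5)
Your proposal is correct and follows essentially the same route as the paper: both parts reduce to the optimality of $O$ plus the mediant inequality of Lemma~\ref{lem:frac}, with your first part matching the paper's variable assignment exactly. The only cosmetic difference is in the second part, where the paper assigns $a = \score(O \cup Y) - \score(O)$, $b = \abs{Y}$, $c = \score(O)$, $d = \abs{O}$ so the lemma applies verbatim, whereas you swap the roles and invoke the (equally valid, by symmetry of the lemma) reversed chain of inequalities.
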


\begin{proof}
For the first statement, apply Lemma~\ref{lem:frac} with $a = \score(O-X)$, $b = \abs{O-X}$, $c = \score(O) - \score(O - X)$, and $d = \abs{X}$.
By the optimality of $O$, we have 
\[\frac{a+c}{b+d} = \frac{\score(O)}{\abs{O}} \geq \frac{\score(O-X)}{\abs{O-X}} = \frac{a}{b}.\]
Lemma~\ref{lem:frac} then implies 
\[\frac{\score(O) - \score(O - X)}{\abs{X}} = \frac{c}{d} \geq \frac{a+c}{b+d} = \dens(O).
\]

For the second statement, apply Lemma~\ref{lem:frac} with $a = \score(O \cup Y) - \score(O)$, $b = \abs{Y}$, $c = \score(O)$, and $d = \abs{O}$.
By the optimality of $O$, we have 
\[\frac{c}{d} = \frac{\score(O)}{\abs{O}} \geq \frac{\score(O \cup Y)}{\abs{O}+\abs{Y}} = \frac{a+c}{b+d}.\]
Lemma~\ref{lem:frac} then implies 
\[\dens(O) = \frac{\score(O)}{\abs{O}} = \frac{c}{d} \geq \frac{a}{b} = \frac{\score(O \cup Y) - \score(O)}{\abs{Y}},
\]
proving the claim.
\qed
\end{proof}

\begin{lemma}
Let $v \in O \cap W$. Let $Z \subseteq S$ be the set of nodes connected to $v$ by an edge in $E_2'$. There are at least $k$ nodes shared by $Z$ and $O$.
\label{lem:enough}
\end{lemma}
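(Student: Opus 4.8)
The plan is to argue by contradiction using the optimality of $O$, made quantitative by Lemma~\ref{lem:opt}. Suppose $v \in O \cap W$ but only $q = \abs{Z \cap O} < k$ of its supporting $S$-nodes lie in $O$; I want to show this makes the marginal value of $v$ too small to justify its inclusion. Applying Lemma~\ref{lem:opt} with $X = \set{v}$ (so $\abs{X} = 1$) gives
$$\score(O) - \score(O \setminus \set{v}) \;\geq\; \dens(O),$$
so the drop in $\score$ caused by deleting $v$ must be at least the optimal density.

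First I would bound $\dens(O)$ from below. Since the weight $\alpha$ is chosen precisely so that $S$ on its own already has density $\beta$ (its fully-contained $E_1$-edges contribute $\beta\ell$, by the definition of $\alpha$), optimality of $O$ gives $\dens(O) \geq \beta$. Dropping the nonnegative $A$-edge term in the definition $\beta = k\tfrac{5\delta(j)+\delta(j+1)}{2} + kb\,\delta(i)$, this yields the usable bound $\dens(O) \geq kb\,\delta(i)$.

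Next I would decompose the marginal $\score(O) - \score(O\setminus\set{v}) = \sum_{e \ni v}\delta(\abs{e\cap O})$ over the two kinds of edges incident to $v$: the $3k$ edges of size $j+1$ arising from the three arcs at $v$ in the $3$-regular graph, and the $kb$ edges of size $i$ in $E_2'$. The size-$(j+1)$ edges contribute at most $3k\Delta$. For the $E_2'$-edges, each edge meeting none of $Z\cap O$ contributes exactly $\delta(1) = r(1)$, while each of the at most $q$ edges that do meet $Z\cap O$ contributes at most $\Delta$; summing gives the upper bound $kb\,r(1) + q(\Delta - r(1))$. Since the $E_2'$-contribution is the total marginal minus the $A$-part, it is at least $\dens(O) - 3k\Delta \geq kb\,\delta(i) - 3k\Delta$, so combining with the upper bound gives
$$q\,(\Delta - r(1)) \;\geq\; kb\,(\delta(i) - r(1)) - 3k\Delta.$$

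The main obstacle—and the reason the reduction fixes its constants so carefully—is making this inequality close up exactly. The key fact is that because $i$ is the \emph{smallest} index with $r(i)/i > r(1)$, we have $r(i-1) \leq (i-1)r(1)$ and hence $\delta(i) > r(1)$, which also forces $\Delta \geq \delta(i) > r(1)$ so that every difference above is positive. The choice $b = \ceil{(4\Delta - r(1))/(\delta(i) - r(1))}$ then guarantees $kb(\delta(i)-r(1)) \geq 4k\Delta - k\,r(1)$, and substituting this into the displayed inequality collapses it to $q(\Delta - r(1)) \geq k(\Delta - r(1))$, i.e.\ $q \geq k$, contradicting $q < k$. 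I expect the delicate point to be exactly this constant bookkeeping: checking that the crude slack in the $A$-edge estimate ($3k\Delta$) and the unavoidable $kb\,r(1)$ baseline of the $E_2'$ estimate are absorbed precisely by the tuned values of $b$ and $\beta$, with the strict gap $\delta(i) > r(1)$ doing the essential work.
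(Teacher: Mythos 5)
Your proposal is correct and follows essentially the same argument as the paper: both lower-bound the marginal $\score(O)-\score(O\setminus\{v\})$ by $\dens(O)\geq\dens(S)\geq\beta\geq kb\,\delta(i)$, upper-bound it by charging $r(1)$ to each $E_2'$-edge meeting $O$ only in $v$ and $\Delta$ to the rest, and then use the disjointness of the $E_2'$-edges in $S$ together with the definition of $b$ to force at least $k$ edges (hence $k$ nodes of $Z\cap O$). The only cosmetic difference is that you run it as a contradiction on $q=\abs{Z\cap O}$ while the paper solves directly for the count $kb-x$; the algebra is identical.
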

\begin{proof}
Let $v \in O \cap W$. 
Lemma~\ref{lem:opt} implies
\[
    \score(O) - \score(O - v) \geq \dens(O) \geq \dens(S) \geq \beta \geq k b \delta(i).
\]
Let $x$ be the number of hyperedges $e \in E_2'$ for which $e \cap O = \set{v}$. Since there are $k(3 + b)$ edges adjacent to $v$ we have
\[
    \score(O) - \score(O - v) \leq r(1)x + \Delta (k(3 + b) - x).
\]
By definition of $x$, there are $kb - x$ edges in $E_2'$ that contain at least one vertex in $Z \cap O$. Combining the two inequalities and solving for $kb - x$ leads to
\[
    k b - x \geq \frac{k b \delta(i) - k(3 + b)\Delta}{\Delta - r(1)} + bk = \frac{bk(\delta(i) - r_1) - 3k \Delta}{\Delta - r(1)}  \geq k,
\]
where the last inequality is due to the definition of $b$. The claim follows since the edges in $E_2'$ are disjoint in $S$.
\qed
\end{proof}

\begin{lemma}
\label{lem:bound}
$\score_2(O) \leq R(3 + b)d$.
\end{lemma}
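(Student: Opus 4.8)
The plan is to bound $\score_2(O)$ by counting how many edges of $E_2$ the set $O$ touches, and then to control that count using the structural facts of the construction together with Lemma~\ref{lem:enough}. Since $r(0)=0$ and $r\le R$, an edge contributes to $\score_2(O)$ only when it meets $O$, and then contributes at most $R$. Hence $\score_2(O)\le R\cdot N$, where $N=\abs{\set{e\in E_2 : e\cap O\ne\emptyset}}$, and it suffices to prove $N\le(3+b)d$.

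I would split the edges meeting $O$ according to whether they are incident to $O\cap W$. Each $u\in W$ lies in exactly $3k$ edges arising from its three incident edges of $A$ (each contributing $k$ size-$(j+1)$ hyperedges) together with the $kb$ edges of $E'_2$, for a total $E_2$-degree of $k(3+b)$; a union bound over the $c$ nodes of $O\cap W$ then gives at most $ck(3+b)$ edges incident to $O\cap W$. For the remaining edges — those meeting $O$ only through $T=O\cap S$ — I would use that every node of $S$ lies in exactly one edge of $E_2$, since the $S$-nodes are unique to each edge. Thus distinct such edges consume distinct nodes of $T$, and a crude bound gives at most $\abs{T}=d$ of them.

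The delicate point, and the reason the constant is $3+b$ rather than $4+b$, is to avoid paying the full $d$ for these latter edges. Here I would invoke Lemma~\ref{lem:enough}: each $v\in O\cap W$ has at least $k$ of its $E'_2$-neighbours in $O$, and since these neighbourhoods are disjoint across $W$, at least $ck$ nodes of $T$ lie in $E'_2$-edges that are \emph{already} incident to $O\cap W$. These $T$-nodes are therefore unavailable to the edges meeting $O$ only through $T$, so the number of the latter is at most $d-ck$. Combining the two parts yields $N\le ck(3+b)+(d-ck)=ck(2+b)+d$, and the same disjointness argument supplies $ck\le d$, whence $N\le(2+b)d+d=(3+b)d$ and $\score_2(O)\le R(3+b)d$.

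I expect the main obstacle to be exactly this bookkeeping: correctly arguing that the $T$-nodes forced into $O$ by Lemma~\ref{lem:enough} sit in edges already counted among those incident to $O\cap W$, so that they cannot be double-charged. A plain union bound immediately gives the weaker $N\le ck(3+b)+d\le(4+b)d$, so obtaining the stated constant hinges on using Lemma~\ref{lem:enough} twice — once to consume $ck$ of the nodes of $T$, and once to guarantee $ck\le d$.
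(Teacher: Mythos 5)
Your proof is correct and follows essentially the same route as the paper's: the same split of the $E_2$ edges meeting $O$ into those incident to $O\cap W$ (at most $(3+b)k$ per vertex) and those meeting $O$ only through $T$, with Lemma~\ref{lem:enough} plus the disjointness of the $S$-portions of the hyperedges supplying the $ck$ nodes of $T$ that absorb the factor $3+b$. The paper phrases this as $\abs{Y}\le(3+b)d_1$ and $\abs{Z}\le d_2$ with $d_1+d_2\le d$, which is the same bookkeeping you carry out via $ck\le d_1$ and $d_2\le d-ck$.
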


\begin{proof}
Let $Y$ be the edges in $E_2$ containing a vertex in $W \cap O$.
Let $Z$ be the edges in $E_2$ that contain a vertex in $O$ but do not contain a vertex in $W \cap O$. Note that $Y$ and $Z$ are the only edges that contribute to $\score_2(O)$. Let $d_1$ be the number of nodes in $O \cap S$ covered by $Y$, and let $d_2$ be the number of nodes in $O \cap S$ covered by $Z$. Note that $d = d_1 + d_2$.

Assume a vertex $v \in W \cap O$. There are $(3 + b)k$ hyperedges in $Y$ attached to $v$. Lemma~\ref{lem:enough} implies that these edges contain $k$ nodes in $O \cap S$. Since edges in $Y$ are disjoint, $\abs{Y} \leq (3 + b)d_1$.

Any edge in $Z$ will have a vertex in $O \cap S$. Since these edges are disjoint in $S$, we have $\abs{Z} \leq d_2$.

Consequently,
\[
    \score_2(O) \leq R (\abs{Y} + \abs{Z}) \leq R((3 + b)d_1 + d_2) \leq R(3 + b)d,
\]
proving the claim.
\qed
\end{proof}

\begin{lemma}
\label{lem:f1}
Let $X \subseteq S$ be a non-empty set. Then
\[
    \score_1(X) \leq \abs{X} r(1) {\ell - 1 \choose i - 1} + \abs{X} \theta {\abs{X} - 1 \choose i - 1}.
\]
The inequality is tight if $X = S$.
\end{lemma}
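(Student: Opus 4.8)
The plan is to exploit the explicit combinatorial structure of $E_1$: it consists of all $i$-element subsets of $S$ (with $\abs{S} = \ell$), so I can group the hyperedges according to how many of their nodes land in $X$. Writing $t = \abs{X}$, exactly $\binom{t}{m}\binom{\ell - t}{i - m}$ edges satisfy $\abs{X \cap e} = m$, hence
\[
\score_1(X) = \sum_{m=1}^{i} r(m)\binom{t}{m}\binom{\ell - t}{i - m},
\]
where the $m = 0$ term drops out because $r(0) = 0$. The whole proof then reduces to bounding each reward value $r(m)$ and evaluating the resulting binomial sums.

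The key pointwise estimate comes from the minimality of $i$. Because $i$ is the \emph{smallest} index with $r(i)/i > r(1)$, every $m$ with $1 \le m \le i - 1$ satisfies $r(m) \le m\,r(1)$, while for $m = i$ we have exactly $r(i) = i\,r(1) + \theta$ by the definition $\theta = r(i) - i\,r(1)$. These two facts combine into the single bound $r(m) \le m\,r(1) + \theta\,\mathbf{1}[m = i]$, valid for all $1 \le m \le i$, and this is the only place the structure of $r$ enters.

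I would then substitute this bound and evaluate the two resulting sums separately. The linear part is a committee--chair (Vandermonde) identity,
\[
\sum_{m=1}^{i} m\binom{t}{m}\binom{\ell - t}{i - m} = t\binom{\ell - 1}{i - 1},
\]
obtained by first choosing the distinguished node of $X$ lying in the edge and then the remaining $i-1$ nodes from the other $\ell - 1$. The $\theta$ part keeps only the $m = i$ term, contributing $\binom{t}{i}$. This already yields the sharp bound $\score_1(X) \le t\,r(1)\binom{\ell - 1}{i - 1} + \theta\binom{t}{i}$, and the stated form follows from the elementary relaxation $\binom{t}{i} = \tfrac{t}{i}\binom{t - 1}{i - 1} \le t\binom{t - 1}{i - 1}$.

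For tightness at $X = S$, observe that when $t = \ell$ every edge is contained in $X$, so $\abs{X \cap e} = i$ for all $e$ and $\score_1(S) = \binom{\ell}{i} r(i)$, which coincides with the sharp bound above (both the pointwise estimate at $m=i$ and the Vandermonde evaluation are equalities there). The step that needs care is the pointwise bound $r(m) \le m\,r(1)$ for $m < i$: it is precisely the assumption that $i$ is the smallest violating index that forces all earlier averages $r(m)/m$ to stay at or below $r(1)$, and without minimality the argument breaks down. Everything else is routine binomial bookkeeping.
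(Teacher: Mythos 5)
Your proof is correct, and it reaches the bound by a genuinely different decomposition than the paper. The paper fixes a single vertex $v \in X$, symmetrizes $\score_1(X) = \abs{X}\sum_{e \in E_1 : v\in e} r(\abs{X\cap e})/\abs{X\cap e}$, and then bounds the per-edge \emph{average} reward by $r(1) + \theta\,\mathbf{1}[e\subseteq X]$, so that counting the edges through $v$ immediately yields the factors $\binom{\ell-1}{i-1}$ and $\binom{\abs{X}-1}{i-1}$. You instead group the $i$-subsets of $S$ by their intersection size $m$ with $X$, apply the pointwise bound $r(m)\le m\,r(1)+\theta\,\mathbf{1}[m=i]$, and evaluate the resulting sums with the committee-chair identity. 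The essential ingredient is identical in both arguments (minimality of $i$ forcing $r(m)/m\le r(1)$ for $m<i$, together with $\theta=r(i)-i\,r(1)$), so the two proofs are close cousins; what your version buys is an explicit sharp intermediate bound $\score_1(X)\le \abs{X}\,r(1)\binom{\ell-1}{i-1}+\theta\binom{\abs{X}}{i}$ and a clean isolation of the only lossy step, $\binom{\abs{X}}{i}\le \abs{X}\binom{\abs{X}-1}{i-1}$. One side remark: precisely because of that last relaxation, the tightness claim at $X=S$ holds for the sharp form but not for the displayed inequality literally (one has $\ell\theta\binom{\ell-1}{i-1}=i\,\theta\binom{\ell}{i}$ versus the true excess $\theta\binom{\ell}{i}$, and $i\ge 2$ whenever the hypothesis $r(i)/i>r(1)$ is nontrivial); you prove exactly what is provable here, and the paper's own per-edge estimate $r(i)/i\le r(1)+\theta$ is likewise strict, so this is a blemish in the lemma's statement rather than in your argument.
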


\begin{proof}
Assume $v \in X$. Then 
\[
\begin{split}
    \score_1(X) & =  \sum_{e \in E_1} r(\abs{X \cap e}) \\
    & = \abs{X} \sum_{e \in E_1; v \in e} \frac{r(\abs{X \cap e})}{\abs{X \cap e}} \\
    & \leq \abs{X} \sum_{e \in E_1; v \in e} r(1) + \sum_{e \in E_1; v \in e \subseteq X} \theta \\
    & = \abs{X} r(1) {\ell - 1 \choose i - 1} + \abs{X} \theta {\abs{X} - 1 \choose i - 1},
\end{split} 
\]
where the inequality is due to the fact that $r(a) / a \leq r(1)$ for any $a < i$ and the definition of $\theta$. Note that the inequality is tight if $X = S$.
\qed
\end{proof}

\begin{lemma}
$S \subseteq O$.
\end{lemma}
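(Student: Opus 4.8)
The plan is to show $T := O \cap S$ equals all of $S$, i.e.\ $d = \abs{T} = \ell$, by assuming some $v \in S \setminus O$ exists and deriving a contradiction. The anchor of the argument is the calibration of the weight $\alpha$: since every edge of $E_1$ lies inside $S$ and contributes $\alpha\, r(i)$, and $\alpha \binom{\ell}{i} r(i) = \ell\beta$, we have $\score_1(S) = \ell\beta$. Hence $\dens(S) \geq \score_1(S)/\ell = \beta$, and by optimality of $O$ this gives the lower bound $\dens(O) \geq \dens(S) \geq \beta$.

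Next I would squeeze $\dens(O)$ from above. Using Lemma~\ref{lem:bound} ($\score_2(O) \leq R(3+b)d$), Lemma~\ref{lem:f1} in its tight form (which is an equality at $X=S$ and, dividing by $\abs{X}$, yields $\dens(T)=\score_1(T)/d \leq \dens(S) = \beta$ for every $T \subseteq S$), and $\abs{O} = c + d \geq d$, I obtain $\dens(O) \leq \dens(T) + R(3+b) \leq \beta + R(3+b)$. Thus $\dens(O)$ is trapped in $[\beta,\ \beta + R(3+b)]$, with additive slack $R(3+b) = \bigO{1}$ while $\beta = \Theta(k)$ is made large by the choice of $k$. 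Feeding $\dens(O) \geq \beta$ back through $\dens(O) \leq \dens(T)+R(3+b)$ forces $\dens(T) \geq \beta - R(3+b)$; plugging in the Lemma~\ref{lem:f1} estimate $\dens(T) \leq \tfrac{\beta i}{r(i)}\pr{r(1) + \tfrac{\theta}{i} g(d)}$ with $g(d) = \binom{d-1}{i-1}/\binom{\ell-1}{i-1} \in [0,1]$ and solving for $g(d)$ gives $g(d) \geq 1 - \tfrac{r(i) R(3+b)}{\theta\beta}$, so large $\beta$ pushes $g(d)$, and therefore $d$, arbitrarily close to $1$ (resp.\ $\ell$).

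Finally I would run a single-node improvement. Applying the second inequality of Lemma~\ref{lem:opt} with $Y=\set{v}$ gives $\score(O \cup \set{v}) - \score(O) \leq \dens(O)$. On the other hand, the $E_1$-edges through $v$ whose remaining $i-1$ endpoints all lie in $T$ each gain $\alpha\,\delta(i)$ when $v$ is added (all other edges and all $E_2$ rewards only help), so the marginal increase is at least $\alpha\binom{d}{i-1}\delta(i) = \tfrac{\beta i \delta(i)}{r(i)}\,g(d)$ up to the passage between $\binom{d}{i-1}$ and $\binom{d-1}{i-1}$. Because $i$ is the \emph{smallest} index with $r(i)/i > r(1)$, we have $r(i-1)/(i-1) \leq r(1) < r(i)/i$, which rearranges to $i\,\delta(i) > r(i)$; hence $\tfrac{i\delta(i)}{r(i)}$ is a constant strictly above $1$, and since $g(d) \approx 1$ the marginal exceeds $\beta + R(3+b) \geq \dens(O)$ once $\beta$ (i.e.\ $k$) is large enough. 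This contradicts Lemma~\ref{lem:opt}, so no such $v$ exists and $S \subseteq O$.

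The main obstacle is the tension between the last two steps: a single missing vertex improves density only when $d$ is already near $\ell$, since for small $d$ the marginal of one vertex is about $\tfrac{\beta i r(1)}{r(i)} < \beta$ and adding it actually lowers the density. One therefore cannot finish by a naive exchange; the correct route is to first extract the strong lower bound $g(d) \approx 1$ from the density squeeze (which in turn requires the \emph{tight} form of Lemma~\ref{lem:f1}, not merely a loose upper bound), and then to verify that the chosen magnitudes of $b$ (via Lemma~\ref{lem:enough} and Lemma~\ref{lem:bound}) and especially of $k$ make the constant gap $\tfrac{i\delta(i)}{r(i)} - 1$ dominate every additive error term, so that the ``large $d$'' regime where the exchange succeeds is exactly the regime the squeeze leaves open.
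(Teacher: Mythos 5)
Your route is genuinely different from the paper's. The paper performs a \emph{bulk} exchange: assuming $S \not\subseteq O$, it sets $O' = O \cup S$, splits $\dens(O') - \dens(O)$ into the $E_1$ and $E_2$ contributions, lower-bounds the $E_1$ gain by $\frac{\theta i}{2r(i)}\beta\,\frac{\ell-d}{\ell-1}$ via Lemma~\ref{lem:f1} and upper-bounds the $E_2$ loss by $R(3+b)\frac{\ell-d}{\ell}$ via Lemma~\ref{lem:bound}. Both scale linearly in the number $\ell-d$ of added vertices, so the whole comparison collapses to $\beta > \frac{2(3+b)Rr(i)}{\theta i}$, which the construction supplies. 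You instead run a two-stage argument: a density squeeze forcing $g(d)\approx 1$, followed by a single-vertex exchange exploiting $i\,\delta(i) > r(i)$. The individual ingredients are sound — $\dens(O)\ge\dens(S)\ge\beta$, the upper bound via Lemmas~\ref{lem:f1} and~\ref{lem:bound}, and the derivation of $i\,\delta(i)>r(i)$ from the minimality of $i$ are all correct (though note you need the per-edge constant $r(1)+\theta/i$, i.e.\ the genuinely tight value at $X=S$, rather than the $r(1)+\theta$ that Lemma~\ref{lem:f1} literally states).

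The gap is the quantitative verification you defer to at the end, and it does not hold for the constants of the construction. Your squeeze yields $1-g(d) \le \frac{r(i)R(3+b)}{\theta\beta}$, while for the single-vertex marginal $\alpha\binom{d}{i-1}\delta(i) = \frac{\beta i\delta(i)}{r(i)}h(d)$ to exceed $\dens(O)\ge\beta$ you need $1-h(d) < \frac{i\delta(i)-r(i)}{i\delta(i)}$. The gap $i\delta(i)-r(i)$ is bounded below only by $(i-1)\theta$ (and equals it when $r(i-1)=(i-1)r(1)$), so the two conditions combine to require $\beta \gtrsim \frac{i\delta(i)\,r(i)\,R(3+b)}{(i-1)\theta^{2}}$, i.e.\ $\beta$ of order $\theta^{-2}$. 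The construction only provides $\beta \ge kb\,\delta(i)$ with $k = \lceil 4R/\eta + 6R/\theta\rceil$, i.e.\ $\beta$ of order $\theta^{-1}$, so for reward functions with $\theta = r(i)-ir(1)$ near zero your final inequality fails and no contradiction is reached. The approach could be salvaged by enlarging $k$ to depend on $\theta^{-2}$ (still a constant of the reduction), but that amounts to changing the construction rather than proving the lemma as stated. This is precisely what the bulk exchange buys: the $r(1)\binom{\ell-1}{i-1}$ terms cancel between $\score_1(S)/\ell$ and $\score_1(T)/d$ before any factor of $\theta$ is spent, so only one power of $\theta$ ever appears in the required size of $\beta$ — whereas your argument spends one power of $\theta$ in the squeeze and a second in the exchange.
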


\begin{proof}
Assume otherwise. Let $O' = O \cup S$. Note that Lemma~\ref{lem:enough} guarantees that $T$ is not empty.

To prove the claim, we need to show that $\dens(O') > \dens(O)$.
We will do this by bounding the individual terms. We start with $\score_1$, that can be bounded with
\begin{align*}
    \frac{\score_1(O')}{\abs{O'}} - \frac{\score_1(O)}{\abs{O}} & = \frac{\score_1(S)}{\ell + c} -   \frac{\score_1(T)}{d + c}  \\
    & \geq \frac{\score_1(S)}{\ell + c} -   \frac{\score_1(T)}{d + \frac{d}{\ell}c} & (d \leq \ell)\\
    & = \frac{\ell}{\ell + c} \pr{\frac{\score_1(S)}{\ell} -   \frac{\score_1(T)}{d}} \\ 
    & \geq \frac{1}{2}  \pr{\frac{\score_1(S)}{\ell} -   \frac{\score_1(T)}{d}} & (\ell \geq m \geq c)\\ 
    & \geq \frac{1}{2} \theta \pr{{\ell - 1 \choose i - 1} - {d - 1 \choose i - 1}} . & (\text{Lemma~\ref{lem:f1}}) \\
\end{align*}
Consequently,
\begin{equation}
\label{eq:f1}
    \frac{\alpha \score_1(O')}{\abs{O'}} - \frac{\alpha \score_1(O)}{\abs{O}} \geq 
    \frac{\theta \beta i}{2r(i)}\pr{1 - {d - 1 \choose i - 1} / {\ell - 1 \choose i - 1}} \geq \frac{\theta i}{2r(i)}\beta \frac{\ell - d}{\ell - 1},
\end{equation}
where the second inequality is due to the fact that
\[
    {d - 1 \choose i - 1} / {\ell - 1 \choose i - 1}
    \leq \frac{d - 1}{\ell - 1}.
\]

Next we bound $\score_2$ terms with
\begin{equation}
\label{eq:f2}
\begin{aligned}
    \frac{\score_2(O')}{\abs{O'}} - \frac{\score_2(O)}{\abs{O}} & \geq \frac{\score_2(O)}{\abs{O'}} - \frac{\score_2(O)}{\abs{O}} \\
    & = \frac{\score_2(O)}{d}\pr{\frac{d}{\ell + c} - \frac{d}{d + c}} \\
    & \geq R(3 + b)\pr{\frac{d}{\ell + c} - \frac{d}{d + c}} & \quad(\text{Lemma~\ref{lem:bound}}) \\
    & = R(3 + b) d \frac{d - \ell}{(\ell + c)(d + c)} \\
    & \geq R(3 + b) d \frac{d - \ell}{d \ell} \\
    & = R(3 + b) \frac{d - \ell}{\ell}.
\end{aligned}
\end{equation}
Note that the latter two inequalities rely on the fact that $d \leq \ell$.

To match the coefficients in the bounds, we can use the definition of $k$ and $\beta$ and show that
\begin{align}
\label{eq:beta}
    \beta & \geq kb \delta(i) > kb r(1)
    \geq  kb \frac{r(i)}{i} \geq \frac{6bR r(i)}{\theta i}
    > \frac{2(3 + b)R r(i)}{\theta i}.
\end{align}

Finally, we have
\begin{align*}
    \dens(O') - \dens(O) & = \frac{\alpha \score_1(O')}{\abs{O'}} - \frac{\alpha \score_1(O)}{\abs{O}} + \frac{\score_2(O')}{\abs{O'}} - \frac{\score_2(O)}{\abs{O}} \\
    & \geq \frac{\theta i}{2r(i)}\beta \frac{\ell - d}{\ell - 1} + R(3 + b) \frac{d - \ell}{\ell} & (\text{Eqs.~\ref{eq:f1}--\ref{eq:f2}})\\
    & > \pr{\frac{\theta i}{2r(i)}\beta - R(3 + b)}\frac{\ell - d}{\ell} \\
    & > 0, & (\text{Eq.~\ref{eq:beta}})
\end{align*}
which contradicts the optimality of $O$.
\qed
\end{proof}

\begin{lemma}
\label{lem:ind}
Let $(u, v) \in A$. If $u \in O$, then $v \notin O$.
\end{lemma}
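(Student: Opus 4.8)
The plan is to derive a contradiction by showing that if both endpoints $u,v$ of an edge $(u,v)\in A$ lie in $O$, then deleting $v$ makes the marginal decrease in $\score$ fall below the density, contradicting optimality. Since the preceding lemma gives $S\subseteq O$, I write $O = S\cup C$ with $C = O\cap W$, so I only need to track the effect of removing a single node $v\in C$. Because $v\in W$ and the edges of $E_1$ live entirely inside $S$, the marginal $\score(O)-\score(O-v)$ involves only the $E_2$-edges incident to $v$.

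First I would compute this marginal explicitly over the families of $E_2$-edges at $v$. Each of the $kb$ edges of $E_2'$ at $v$ has exactly $i$ of its nodes in $O$ (the $i-1$ private $S$-nodes, which lie in $O$, together with $v$), so each contributes $\delta(i)$ when $v$ is removed. Each of the $3$ edges of $A$ incident to $v$ spawns $k$ hyperedges of size $j+1$ whose $j-1$ private $S$-nodes are always present; such a hyperedge has $j+1$ nodes in $O$ when the other endpoint lies in $O$ (contributing $\delta(j+1)$) and $j$ nodes otherwise (contributing $\delta(j)$). Writing $t\ge 1$ for the number of neighbors of $v$ lying in $O$ (at least $u$ qualifies), this gives
\[
\score(O) - \score(O-v) = kb\,\delta(i) + t\,k\,\delta(j+1) + (3-t)\,k\,\delta(j) = kb\,\delta(i) + 3k\,\delta(j) - t\,k\,\eta,
\]
where $\eta = \delta(j)-\delta(j+1) > 0$ is positive precisely because of the hypothesis $2r(j) > r(j-1)+r(j+1)$.

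Next I would compare this marginal against the baseline density. Using the definition $\beta = k\tfrac{5\delta(j)+\delta(j+1)}{2} + kb\,\delta(i)$, for every $t\ge 1$ we get
\[
\big(\score(O)-\score(O-v)\big) - \beta = k\Big(3\delta(j) - \tfrac{5\delta(j)+\delta(j+1)}{2}\Big) - t\,k\,\eta = k\eta\big(\tfrac12 - t\big) \le -\tfrac{k\eta}{2} < 0 .
\]
Combining this with the chain $\dens(O)\ge\dens(S)\ge\beta$ established earlier, the marginal is strictly smaller than $\dens(O)$.

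Finally I would extract the contradiction directly from Lemma~\ref{lem:opt}: applied with $X=\set{v}$, optimality of $O$ forces $\score(O)-\score(O-v)\ge\dens(O)$, whereas the two displays show this same marginal is at most $\beta-\tfrac{k\eta}{2} < \beta \le \dens(O)$. Hence $u$ and $v$ cannot both belong to $O$, which is the claim. The main obstacle is the careful bookkeeping of the marginal across the three edge families at $v$ and its dependence on $t$; the decisive point is that the concavity $\eta>0$ makes the penalty $-t\,k\,\eta$ large enough to pull the marginal of a conflicting node below the baseline $\beta$, which is exactly the balance the definition of $\beta$ was chosen to enforce.
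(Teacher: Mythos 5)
Your proof is correct and follows essentially the same route as the paper's: both compute the marginal $\score(O)-\score(O-v)=kb\,\delta(i)+t k\,\delta(j+1)+(3-t)k\,\delta(j)$ using $S\subseteq O$, compare it to $\beta$ via the chain $\beta\le\dens(S)\le\dens(O)\le\score(O)-\score(O-v)$ from Lemma~\ref{lem:opt}, and conclude from $\eta>0$ that the number of in-$O$ neighbors must be zero. The only cosmetic difference is that you phrase it as a contradiction for $t\ge 1$ while the paper derives $a=0$ directly.
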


\begin{proof}
Let $a$ be the number of neighbors of $u$ in $H$ that are in $O$. The fact that $S \subseteq O$ and Lemma~\ref{lem:opt} imply
\[
    \beta \leq \dens(S) \leq \dens(O) \leq \score(O) - \score(O - u) = kb \delta(i) + ak \delta(j + 1) + (3 - a)k \delta(j).
\]
Now the definition of $\beta$ results in
\[
    2.5\times \delta(j) + 0.5\times \delta(j + 1) \leq a \delta(j + 1) + (3 - a)\delta(j).
\]
Since $\delta(j) > \delta(j + 1)$, this is only possible when $a = 0$.
\qed
\end{proof}

\begin{proof}[Proof of Theorem~\ref{prop:nphard}]
Lemma~\ref{lem:ind} shows that $O \cap W$ is an independent set.
We prove the claim by showing that $O \cap W$ is the maximum independent set.

Since $S \subseteq O$, we can rewrite $\score_2(O)$ as
\begin{align*}
    \score_2(O) & = c k b r(i) + 3ckr(j) + (n - c) k b r(i - 1) + (m - 3c) kr(j - 1)  \\
    & = c k b \delta(i) + 3 c k \delta(j) + \omega,
\end{align*}
where
$\omega = n k b r(i - 1) + m kr(j - 1)$.

The bound
\begin{align*}
    \frac{\alpha{\ell \choose i}r(i) + \omega}{\ell} & = \beta + \frac{\omega}{\ell} \\
    & \leq \beta + R \\
    & \leq \beta + k\eta/4 \\
    & = k \frac{11\delta(j) + \delta(j + 1)}{4} + kb\delta(i) \\
    & \leq  k b \delta(i) + 3k  \delta(j)
\end{align*}
together with Lemma~\ref{lem:frac} imply that
\[
    \dens(O) = \frac{\alpha{\ell \choose i}r(i) + \omega + c k b \delta(i) + 3ck\delta(j)}{\ell + c}
\]
is maximized when $c$ is maximized.
\qed
\end{proof}

\section{Projection-based Method Approximation Guarantee}
\begin{proposition}
\label{eq:propkapp}
    Let $r \colon [0,k]  \rightarrow \mathbb{R}^+$ be a monotonic reward function satisfying $r(0) = 0$. There exists a nonnegative monotonic convex function $\hat{r}$ satisfying $\hat{r}(i) \leq r(i) \leq k \cdot \hat{r}(i)$ for every $i \in \{1,2, \hdots, k\}$.
\end{proposition}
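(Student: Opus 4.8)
The plan is to construct $\hat r$ explicitly as the \emph{lower convex hull} of the points $\{(j, r(j)) : j = 0, 1, \dots, k\}$, which is exactly the convex minorant singled out by the linear program in the main text. First I would record its basic structural properties: $\hat r$ is piecewise linear and convex, its breakpoints form a subset of the original points (in particular $(0,0)$ and $(k, r(k))$ are always vertices, being the horizontally extreme points), and $\hat r(j) \le r(j)$ for all $j$, with equality whenever $(j, r(j))$ is a hull vertex. Monotonicity of $\hat r$ then follows because every $y$-coordinate is at least $r(0) = 0$, so the leftmost hull edge --- which carries the smallest slope by convexity --- already has nonnegative slope; hence all slopes are nonnegative and $\hat r$ is nondecreasing. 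This settles that $\hat r$ is a nonnegative monotonic convex function with $\hat r \le r$, leaving only the factor-$k$ lower bound $r(i) \le k\,\hat r(i)$ to verify.

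For the quantitative bound, I would fix $i \in \{1, \dots, k\}$ and split into cases. If $(i, r(i))$ is a hull vertex, then $\hat r(i) = r(i)$ and the inequality is immediate. Otherwise $i$ lies strictly between two consecutive hull vertices $(a, r(a))$ and $(c, r(c))$ with $0 \le a < i < c \le k$, and $\hat r(i)$ is the value at $i$ of the segment joining them:
\[
\hat r(i) = \frac{(c - i)\,r(a) + (i - a)\,r(c)}{c - a}.
\]
Dropping the nonnegative term $(c - i)\,r(a) \ge 0$ then gives the clean estimate $\hat r(i) \ge \frac{i - a}{c - a}\, r(c)$.

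To finish, I would bound the geometric factor and invoke monotonicity: since $a$ and $i$ are integers with $a < i$ we have $i - a \ge 1$, while $c - a \le k - 0 = k$, so $\frac{i - a}{c - a} \ge \frac{1}{k}$; and since $r$ is nondecreasing and $c \ge i$, we have $r(c) \ge r(i)$. Combining these yields $\hat r(i) \ge \frac{1}{k}\, r(c) \ge \frac{1}{k}\, r(i)$, which is precisely $r(i) \le k\,\hat r(i)$. I expect the only mildly delicate points to be the bookkeeping that certifies the bracketing hull vertices $a, c$ as genuine integer indices in $\{0, \dots, k\}$ (so the span $c - a$ is at most $k$ while the gap $i - a$ is at least $1$), and the argument that $\hat r$ inherits monotonicity from $r$; everything else is routine. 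The tight instance $r(x) = \mathbf{1}[x > 0]$ from the main text, for which $\hat r(x) = x/k$, confirms that the constant $k$ cannot be improved.
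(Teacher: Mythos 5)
Your proposal is correct and follows essentially the same route as the paper: both take $\hat r$ to be the lower convex hull of $\{(j,r(j))\}$ (the paper constructs it via an iterative minimum-slope procedure and re-derives convexity, while you invoke the hull's standard properties directly), and the factor-$k$ estimate is the identical computation --- drop the nonnegative term coming from the left hull vertex, then use $i-a\ge 1$, $c-a\le k$, and $r(c)\ge r(i)$. The only cosmetic difference is that you explicitly verify monotonicity of $\hat r$, which the paper leaves implicit.
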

\begin{proof}
    To prove the result, we explicitly describe a procedure for finding the optimal $\hat{r}$. Although not the most efficient in terms of runtime, the procedure is conceptually simple and allows us to prove the desired bound.  

    At step 0, we initialize the procedure by setting $t_0 = 0$ and $\hat{r}(0) = 0$. At the beginning of step $i$, we assume we have already defined $\hat{r}$ on the interval $[0,t_{i-1}]$ for some positive integer $t_{i-1}$. If $t_{i-1} < k$, we then find a new point $t_{i}$ as
    \begin{align}
    \label{eq:slope}
        t_{i} = \min_{j \in \{t_{i-1} + 1, \hdots, k \}} \frac{r(j) - r(t_{i-1})}{j - t_{i-1}}.
    \end{align}
    In other words, we find the line connecting $(t_{i-1}, r(t_{i-1}))$ to another point $(j, r(j))$ with the smallest possible slope. We then define $\hat{r}$ on the interval $[t_{i-1}, t_{i}]$ to be the line segment from $(t_{i-1}, r(t_{i-1}))$ to $(t_{i}, r(t_{i}))$. This continues until an iteration $m$ where $t_m = k$.

    The proof follows by showing three properties for this procedure:
    \begin{enumerate}[label={(\alph*)}]
        \item $\hat{r}(j) \leq r(j)$ for $j \in \{0,1, \hdots k\},$ 
        \item $\hat{r}$ is convex, and
        \item $r(j) \leq k \cdot \hat{r}(j)$ for $j \in \{0,1, \hdots k\}.$ 
    \end{enumerate}
    Property (a) follows from the fact that at each iteration, we chose the line segment with the smallest possible slope. To prove Property (b), consider the points $\{t_{i-1}, t_i, t_{i+1}\}$ identified in three consecutive iterations of the procedure. 
    Using the minimality property of $t_i$ in Eq.~\eqref{eq:slope}, we have
    \begin{align*}
        \frac{r(t_i) - r(t_{i-1})}{t_i - t_{i-1}} &\leq  \frac{r(t_{i+1}) - r(t_{i-1})}{t_{i+1} - t_{i-1}} = \frac{[r(t_{i+1}) - r(t_i)] + [r(t_i) - r(t_{i-1})]}{[t_{i+1} - t_i] + [t_i -t_{i-1}]}  \\
        &=\alpha \frac{r(t_{i+1}) - r(t_i)}{t_{i+1} -t_i} + (1-\alpha) \frac{r(t_i) - r(t_{i-1})}{t_i -t_{i-1}},
    \end{align*}
    where the last equation must hold for some $\alpha \in [0,1]$. Rearranging gives
    \begin{align*}
        \frac{r(t_i) - r(t_{i-1})}{t_i - t_{i-1}}
        &\leq \frac{r(t_{i+1}) - r(t_i)}{t_{i+1} -t_i}.
    \end{align*}
    In other words, the slopes from iteration $i$ to iteration $i+1$ are nondecreasing. Since $\hat{r}$ is a piecewise linear function with nondecreasing slopes, it is convex.
    
    Finally, to show Property (c), it is sufficient to show that the bound holds for each individual line segment. 
    On interval $[t_i,t_{i+1}]$, $\hat{r}$ is defined by the line
    \begin{align*}
        \hat{r}(x) = \frac{r(t_{i+1})-r(t_i)}{t_{i+1}-t_i}(x - t_i) + r(t_i).
    \end{align*}
    For $j \in \{t_i + 1, \cdots, t_{i+1} - 1\}$, we have
    \begin{align*}
        \frac{r(j)}{\hat{r}(j)} &= \frac{r(j)(t_{i+1} - t_i)}{(r(t_{i+1})-r(t_i))(j-t_i) + r(t_i)(t_{i+1} - t_i)} \\
        &= \frac{r(j)(t_{i+1} - t_i)}{r(t_{i+1})(j-t_i) + r(t_i)(t_{i+1} - j)} \\
        &\leq \frac{r(j)}{r(t_{i+1})} \cdot \frac{(t_{i+1} - t_i)}{(j-t_i)} \leq k.
    \end{align*}
The last inequality follows from the monotonicity of $r$ and the fact that $j - t_i \geq 1$ and $t_{i+1} - t_i \leq k.$   
\qed
\end{proof}

\section{Approximation Guarantee for \card}

\begin{theorem}
\label{thr:card}
Assume that we can $\alpha$-approximate \gdsh, then Algorithm~\ref{algo:card} yields $\frac{\alpha}{\alpha + 1}$ approximation for \card{}. Consequently, using Algorithm~\ref{algo:peeling} together with Algorithm~\ref{algo:card} yields $\frac{1}{k + 1}$ approximation.
\end{theorem}

\begin{proof}
Let us write $\beta = \alpha/(1 + \alpha)$.

Let $O$ be the optimal solution, and let $S_i$, $H_i$, $S'$ and (final) $S$ be as used in Algorithm~\ref{algo:card}.

Let $\dens_i$ be the density function in $H_i$,
and let $T_i$ be the subgraph optimizing the density $\dens_i$.

Define $W_i = \bigcup_{j < i} S_j$ be the nodes removed from $H$ to obtain $H_i$. Let $c$ be the index of the last $S_i$ added, and write $W = W_c$.

We split the proof in two cases:

Assume that $\score(W) \leq \beta\score(O)$. We have immediately
\[
\begin{split}
    \dens_i(T_i) & \geq \dens_i(O \setminus W_i) \\
    & = \frac{\score(O \cup W_i) - \score(W_i)}{\abs{O \setminus W_i}}
    \geq\frac{\score(O) - \score(W)}{\abs{O \setminus W_i}}
    \geq \frac{1}{\alpha + 1}\frac{\score(O)}{\abs{O \setminus W_i}} \geq \frac{\dens(O)}{\alpha + 1} .
\end{split}
\]
Consequently, $\dens_i(S_i) \geq \beta\dens(O)$ and
\begin{align*}
\begin{split}
    \dens(S) = \frac{\sum_i \score(S_i \cup W_i) - \score(W_i)}{\sum_{i} \abs{S_i}}
    = \frac{\sum_i \abs{S_i}\dens_i(S_i)}{\sum_{i} \abs{S_i}}
    \geq \beta  \frac{\sum_i \abs{S_i}\dens(O)}{\sum_{i} \abs{S_i}} = \beta\dens(O),
\end{split}
\end{align*}
proving the first case.

Assume now that $\score(W) \geq \beta \score(O)$.
Then
\[
    \dens(S') = \frac{\score(S')}{\ell} 
    \geq \frac{\score(W)}{\ell} \geq \beta\frac{\score(O)}{\ell} \geq \beta\dens(O),
\]
proving the second case.
\qed
\end{proof}

%





\end{document}